\documentclass[11pt]{article}
\usepackage{amsmath}
\usepackage{amsthm}
\usepackage{amssymb}
\usepackage{url}
\usepackage[final]{showkeys}
\usepackage[dvipsnames]{xcolor}
\usepackage{graphicx}
\usepackage{bm}

\usepackage[margin=1in]{geometry}

\usepackage{titlesec}

\titlespacing*{\section}{0pt}{11pt plus 2pt minus 2pt}{8pt plus 1pt minus 1pt}
\titlespacing*{\subsection}{0pt}{9pt plus 2pt minus 2pt}{6pt plus 1pt minus 1pt}

\usepackage[vlined]{algorithm2e}

\usepackage{fullpage}

\usepackage{stmaryrd}

\usepackage[colorlinks=true, allcolors=blue]{hyperref}

\usepackage{latexsym}

\newtheorem{theorem}{Theorem}[section]
\newtheorem{lemma}[theorem]{Lemma}

\newtheorem{corollary}[theorem]{Corollary}
\newtheorem{definition}[theorem]{Definition}

\newcommand{\AppendixBanner}{%
  \hbox to \textwidth{%
    \leaders\hbox{\rule[.55ex]{1em}{0.5pt}}\hfill
    \enspace\textbf{APPENDIX}\enspace
    \leaders\hbox{\rule[.55ex]{1em}{0.5pt}}\hfill
  }%
}

\makeatletter
\def\ps@appendix{%
  \let\@mkboth\markboth
  \def\@oddhead{\AppendixBanner}%
  \def\@evenhead{\AppendixBanner}%
  \def\@oddfoot{\hfil\thepage\hfil}%
  \def\@evenfoot{\hfil\thepage\hfil}%
}
\makeatother

\usepackage{caption}
\usepackage{subcaption}

\usepackage{ifpdf}
\ifpdf
\else
\fi

\title{An improved randomized algorithm for finding sinks of\\ Acyclic Unique Sink Orientations of the binary hypercube}

\newcommand{\ignore}[1]{}
\newcommand{\remove}[1]{}

\newcommand{\EQ}{\;=\;}
\newcommand{\LE}{\;\le\;}
\newcommand{\LT}{\;<\;}
\newcommand{\GE}{\;\ge\;}

\newcommand{\be}{\bm{e}}
\newcommand{\bs}{\bm{s}}
\newcommand{\bx}{\bm{x}}
\newcommand{\by}{\bm{y}}
\newcommand{\bz}{\bm{z}}

\newcommand{\vH}{H}

\newcommand{\cube}{\{0,1\}^n}
\newcommand{\sink}{sink}
\newcommand{\cF}{\mathcal{F}}

\newcommand{\floor}[1]{{\lfloor#1\rfloor}}
\newcommand{\ceil}[1]{{\lceil#1\rceil}}

\usepackage[
    style=alphabetic, 
    backend=biber, 
    giveninits=false, 
    maxbibnames=99, 
    maxcitenames=3, 
    maxalphanames=4,  
    minalphanames=3,
    doi=true, 
    url=true
]{biblatex}
\DeclareSourcemap{
  \maps[datatype=bibtex]{
    \map{
      \step[fieldsource=doi, final]
      \step[fieldset=url, null]
    }
  }
}

\renewbibmacro{in:}{%
  \ifentrytype{article}{}{\printtext{\bibstring{in}\intitlepunct}}%
}

\DeclareFieldFormat[article,incollection,inproceedings]{title}{\textit{#1}}

\begin{document}

\author{Thomas Dueholm Hansen \thanks{E-mail: {\tt
    thomasdueholm@gmail.com}.} \and Uri Zwick
\thanks{Blavatnik School of
Computer Science and AI, Tel Aviv University,
  Israel. E-mail: {\tt
    zwick@tau.ac.il}.}}

\date{}

\maketitle

\begin{abstract}
\setlength{\baselineskip}{12pt} %
\setlength{\parskip}{12pt} %
\noindent %
Acyclic Unique Sink Orientations (AUSOs) of the binary hypercube form an elegant abstraction of various optimization problems, ranging from linear programs whose feasibility region is a combinatorial hypercube, to 2-player turn-based stochastic games (2TBSGs) in which one of the players has two actions from each one of its states. The currently fastest known randomized algorithm for finding the sink of such AUSOs is the Random-Facet algorithm of Kalai (STOC 1992), and of Matou{\v{s}}ek, Sharir and Welzl (SoCG 1992), adapted to the binary hypercube by G{\"a}rtner (RSA 2002). The expected number of vertex queries made by the algorithm is at most $e^{2\sqrt{n}}$, where~$n$ is the dimension of the cube. We design an improved randomized sink-finding algorithm that performs an expected number of only $e^{\sqrt{(2+o(1))n}}$ queries. This is the first improved bound for the problem in over two decades. The improved algorithm is obtained by adapting improved versions of the Random-Facet algorithm for solving general linear programs, and LP-type problems, developed by Kalai (STOC 1992), G{\"a}rtner (FOCS 1992), and by Hansen and Zwick (STOC 2015). The improved bound is not implied by these much more general algorithms. A key feature of the improved algorithm and its analysis is its simplicity. The structure of binary AUSOs enables a simpler description and a much simpler analysis of the adapted algorithm. In particular, the recurrence relations bounding the complexity of the algorithm depend on only one variable, and not two, and they can be solved using fairly straightforward techniques.
\end{abstract}

\section{Introduction}\label{S-intro}

An orientation of the binary hypercube is said to be a \emph{Unique Sink Orientation (USO)} if and only if every subcube has a unique sink under this orientation. A \emph{sink} is a vertex with only incoming edges. An orientation is said to be an \emph{Acyclic Unique Sink Orientation (AUSO)} if it is also acyclic, i.e., it has no directed cycles. The systematic study of USOs and AUSOs was initiated by Szab{\'o} and Welzl \cite{SzWe01}.

As USOs and AUSOs are of exponential size in their dimension, we assume that we are given \emph{oracle} access to them. Given an edge of the hypercube we can determine its orientation, or more generally, given a vertex we can determine the direction of all edges incident on this vertex. The task is to find the sink of the USO or AUSO using a minimal number of \emph{queries} to the oracle.

Various interesting problems can be reduced to the sink-finding problem in USOs and AUSOs. G{\"a}rtner and Schurr \cite{GaSc06} showed that a polynomial-time sink-finding algorithm for USOs implies a strongly polynomial time algorithm for linear programming, a major open problem. A polynomial-time sink-finding algorithm for AUSOs implies a strongly polynomial time algorithm for solving 2-Player, Turn-Based, Stochastic Games (2-TBSGs), an intriguing problem for which even polynomial time algorithms are not currently known. (See, e.g., Ludwig \cite{Ludwig95}, Bj{\"o}rklund and Vorobyov \cite{BjVo05,BjVo07},  Halman \cite{Halman07}, Hansen, Miltersen and Zwick \cite{HaMiZw13},  and Fijalkow et al.~\cite{gamesbook}.)

Unfortunately, the fastest known algorithms for finding sinks of USOs are exponential, and the fastest known algorithm for finding sinks of AUSOs are sub-exponential. More precisely, Szab{\'o} and Welzl \cite{SzWe01} obtained a deterministic $O(1.61^n)$ algorithm and a randomized $O(1.47^n)$ algorithm for finding the sink of a USO of dimension~$n$. An improved randomized $O(1.438^n)$ algorithm was obtained by Rote (see Tessaro \cite{Tessaro04}.) G{\"a}rtner~\cite{Gartner02} obtained a randomized $e^{2\sqrt{n}}$ algorithm for finding the sink of an AUSO of dimension~$n$. G{\"a}rtner's algorithm is an adaptation of randomized linear programming algorithms of Kalai \cite{Kalai92,Kalai97}, and of Matou{\v{s}}ek, Sharir and Welzl~\cite{MaShWe96}.

Our main result is an improved randomized sink-finding algorithm for AUSOs with a running time of $e^{\sqrt{(2+o(1))n}}$. (Note that the coefficient of $\sqrt n$ in the exponent is reduced from~$2$ to essentially $\sqrt2$.) This is the first improved algorithm for the problem in over two decades.

Our algorithm is obtained by adapting improved versions of the algorithms of Kalai \cite{Kalai92,Kalai97}, and of Matou{\v{s}}ek, Sharir and Welzl \cite{MaShWe96}, for solving linear programming, or more generally LP-type problems. (For a definition of LP-type problems, see~\cite{MaShWe96}.) Such improved algorithms were obtained by Kalai \cite{Kalai92,Kalai97}, G{\"a}rtner \cite{Gartner95,Gartner95PhD} and Hansen and Zwick \cite{HaZw15}. A common feature of these improved algorithms is that they are fairly complicated, and have fairly complicated analyses, involving complicated bivariate recurrence relations. We show that by adapting the algorithm of Hansen and Zwick \cite{HaZw15} to the binary hypercube, we get a faster and much simpler algorithm, and with a much simpler analysis that uses only univariate recurrence relations. (Our algorithm only works on AUSOs and does not improve on the more general algorithm of \cite{HaZw15}.)

\subsection*{Paper organization}

The rest of the paper is organized as follows. In Section~\ref{S-AUSO} we give a formal definition of AUSOs. 
In Section~\ref{S-RF} we give the basic version of the Random-Facet algorithm, due to G{\"a}rtner \cite{Gartner02}, whose complexity is $e^{2\sqrt{n}}$. In Section~\ref{S-fiction} we describe a `fictional', non-realizable, algorithm with an improved complexity of $e^{\sqrt{2n}}$. In Section~\ref{S-improved} we use insights gained from the fictional algorithm to obtain a realizable algorithm. We also obtain a recurrence relation that bounds the complexity of the algorithm. In Section~\ref{S-solve} we solve this non-standard recurrence relation and show that the complexity of the obtained algorithm is $e^{\sqrt{(2+o(1))n}}$, essentially matching the complexity of the fictional algorithm. In Section~\ref{S-tighter} we describe a variant of the improved algorithm that avoids the need to tune a constant appearing in the algorithm of Section~\ref{S-improved}, and also provide a self-contained analysis of this algorithm. Section~\ref{S-conclude} contains some concluding remarks and open problems.

\subsection*{Related work}

Zwick \cite{Zwick26} recently obtained a version of the basic Random-Facet algorithm of G{\"a}rtner \cite{Gartner02} for \emph{non-binary} AUSOs called the \emph{Random-Action-Removal} algorithm. Using a complicated analysis, he shows that the running time of the algorithm on an AUSO of the cube $\prod_{i=1}^n A_i$ is $e^{O(\sqrt{n\log\frac{m}{n})}}$, where $m=\sum_{i=1}^n |A_i|$. The main challenge in~\cite{Zwick26} is the analysis of the non-binary version of the algorithm. No improvements are implied by this result for the binary AUSOs studied here. The current paper is, in a sense, orthogonal to \cite{Zwick26}. It deals only with binary AUSOs and it improves on the algorithm of G{\"a}rtner \cite{Gartner02} for this case using ideas taken from Kalai \cite{Kalai92,Kalai97}, G{\"a}rtner \cite{Gartner95} and Hansen and Zwick \cite{HaZw15}.


\begin{figure}
\centering
\begin{subfigure}{.24\textwidth}
  \centering
  \includegraphics[scale=0.8]{figure/auso.4}
  \caption{AUSO}
\end{subfigure}%
\begin{subfigure}{.24\textwidth}
  \centering
  \includegraphics[scale=0.8]{figure/auso.5}
  \caption{AUSO}
\end{subfigure}
\begin{subfigure}{.24\textwidth}
  \centering
  \includegraphics[scale=0.8]{figure/auso.6}
  \caption{Two sinks}
\end{subfigure}
\begin{subfigure}{.24\textwidth}
  \centering
  \includegraphics[scale=0.8]{figure/auso.7}
  \caption{Cycle}
\end{subfigure}
\captionsetup{skip=11pt}
\caption{All non-isomorphic orientations of binary 2-cubes.}\label{F-2}
\smallskip
\end{figure}


\begin{figure}
\centering
\begin{subfigure}{.33\textwidth}
  \centering
  \includegraphics[scale=0.75]{figure/auso.1}
  \caption{AUSO}
\end{subfigure}%
\begin{subfigure}{.33\textwidth}
  \centering
  \includegraphics[scale=0.75]{figure/auso.2}
  \caption{Two sinks}
\end{subfigure}
\begin{subfigure}{.33\textwidth}
  \centering
  \includegraphics[scale=0.75]{figure/auso.3}
  \caption{USO with cycle}
\end{subfigure}
\captionsetup{skip=11pt}
\caption{Some orientations of the binary 3-cube.}\label{F-3}
\smallskip
\end{figure}

\section{Acyclic Unique Sink Orientations (AUSOs)}\label{S-AUSO}

Our definitions of USOs and AUSOs follow the definitions of Szab{\'o} and Welzl \cite{SzWe01}. Similar definitions, using a different terminology, were given earlier by Williamson Hoke \cite{WiHo88}. The various definitions are illustrated in Figures~\ref{F-2} and~\ref{F-3}.

\begin{definition}[Hypercubes]
    The \emph{binary hypercube} $C=C_n=(V_n,E_n)$ of dimension~$n$ is an undirected graph whose vertex set is $V_n=\{0,1\}^n$ and whose edge set is $E_n=\{\{\bx,\by\}\mid \bx,\by\in\cube , d_H(\bx,\by)=1\}$, where $d_H(\bx,\by)$ is the Hamming distance between $\bx$ and $\by$. In other words, there is an edge between two vertices~$\bx,\by\in\cube$ if and only if they differ in exactly one coordinate. 
\end{definition}

We let $\be_i=(0,\dots,0,1,0,\dots,0)\in\cube$ be the $i$-th unit vector. Each edge of $C_n$ is thus of the form $\{\bx,\bx\oplus \be_i\}$, for some $\bx\in\cube$ and $i\in[n]=\{1,2,\dots,n\}$. For every vertex $\bx=(x_1,x_2,\ldots,x_n)$ we let $\bar{\bx}=(\bar{x}_1,\bar{x}_2,\ldots,\bar{x}_n)=\bx\oplus(1,1,\dots,1)$.

\begin{definition}[Subcubes and facets]
    If $C=C_n$ and $S=\{(i_1,b_1),\ldots,(i_k,b_k)\}$, where $i_1,\ldots,i_k\in[n]$ are distinct indices and $b_1,b_2,\dots,b_k\in\{0,1\}$, then the \emph{subcube} $C_S=(V_S,E_S)$ of~$C$ is the induced subgraph of~$C$ on the vertex set 
    \[V_S=\{\bx=(x_1,x_2,\ldots,x_n)\in\cube \mid x_{i_1}=b_1, x_{i_2}=b_2, \ldots, x_{i_k}=b_k\}\;.\]
    If $S=\{(i,b)\}$, where $i\in[n]$ and $b\in\{0,1\}$, then $C_S$ is said to be a \emph{facet} of $C$.
\end{definition}

It is not difficult to see that If $S=\{(i_1,b_1),\ldots,(i_k,b_k)\}$, then $C_S$ is isomorphic to $C_{n-k}$. If $S=\emptyset$, then $C_S=C$. A facet $C_{\{(i,b)\}}$ is simply a subcube of~$C=C_n$ of dimension $n-1$. For brevity, we let $C_{i,b}=C_{\{(i,b)\}}$.

\begin{definition}[Orientations, acyclicity and sinks]
    An \emph{orientation} $\vH$ of $C_n$ is obtained by giving each edge of $C_n$ a \emph{direction}. An orientation $\vH$ is thus a directed graph. If $S=\{(i_1,b_1),\ldots,(i_k,b_k)\}$, 
    we let $\vH_S$ be the orientation induced on $C_S$. An orientation $\vH$ is \emph{acyclic} if it does not contain any directed cycles. A \emph{sink} $\bs$ of $\vH$ is a vertex of outdegree~$0$, i.e., all edges incident on~$\bs$ are incoming.
\end{definition}

\begin{definition}[Unique Sink Orientations (USOs)]
    An orientation $\vH$ of $C_n$ is a \emph{Unique Sink Orientation (USO)} if and only if every subcube $\vH_S$ of~$\vH$, including $\vH$ itself, has a \emph{unique} sink.
\end{definition}
\begin{definition}[Acyclic Unique Sink Orientations (AUSOs)]
    An orientation $\vH$ of $C_n$ is an \emph{Acyclic Unique Sink Orientation (AUSO)} if and only if it is a USO and it is also acyclic.
\end{definition}

If a USO, or an AUSO is given explicitly, then its sink can be easily found in linear time.  But, as the size of an explicit representation of an AUSO is exponential in the dimension, this is not an interesting variant of the problem. A much more interesting variant of the problem assumes that we have an oracle access to the AUSO. A \emph{vertex oracle} for an AUSO $\vH$, for example, receives a vertex~$\bx$ and returns the orientation of all the edges incident on~$\bx$. We now get a much more interesting problem: How many \emph{queries} to the oracle are needed to find the sink of a general AUSO?

\section{The Random-Facet algorithm}\label{S-RF}

The Random-Facet algorithm, as adapted to the binary cube by G{\"a}rtner \cite{Gartner02}, has a simple description and an intriguing analysis. We present them both here as they form the basis for our improved algorithm and its analysis.

The algorithm receives an AUSO $H$ of dimension~$n$ and an arbitrary starting point $\bx\in\cube$. (More precisely, the algorithm receives access to an oracle that represents~$H$.) The algorithm chooses, uniformly at random, an index $j\in[n]$ that defines a random facet $\vH_{j,x_j}$ of~$H$. (This gives the algorithm its name.) It recursively finds the sink $\by$ of the facet~$\vH_{j,x_j}$ starting from~$\bx$. If the edge connecting $\by$ and $\bz=\by\oplus \be_j$ is directed from $\bz$ to $\by$, then~$\by$ is also the sink of $\vH$, and~$\by$ is returned. Otherwise, the algorithm moves along the directed edge from $\by$ to $\bz$ and recursively finds the sink of~$\vH$ starting from~$\bz$. (Equivalently, it finds the sink of the facet~$\vH_{j,z_j}$ starting from $\bz$.)

The correctness of the algorithm is obvious. The first recursive call finds the unique sink~$\by$ of $\vH_{j,x_j}$. Note that $y_j=x_j$. If~$\by$ is not also the sink of $\vH$, then the sink of $\vH$ must be the sink of the facet~$\vH_{j,\bar{y}_j}=\vH_{j,z_j}$ which is found by the second recursive call. The algorithm is guaranteed to stop as it can be proved by induction, by unraveling the recursion, that it defines a directed path in~$\vH$.

At first sight, it may seem that the Random-Facet algorithm should have an exponential complexity. To solve a problem of dimension~$n$ it performs two recursive calls on problems of dimension $n-1$. (Worse still, the second recursive call, as first stated, is on a problem of dimension~$n$.) However, a short and surprising analysis, given below, shows that the second recursive call is effectively on a subcube of dimension $n-i$, where~$i$ is chosen uniformly at random from $[n]$. (This $i$ should not be confused with the index~$j$ explicitly chosen by the algorithm.)

Since $\vH$ is acyclic, it has a \emph{topological ordering} $t:\cube\to[2^n]$. If $\bx\to\by$ is a directed edge of~$\vH$, then $t(\bx)<t(\by)$. If $\bs$ is the sink of $\vH$ then $t(\bs)=2^n$. For the analysis of the Random-Facet algorithm it is convenient to fix one such topological ordering.  For every facet $\vH_{i,b}$, let 
\[t(\vH_{i,b}) \EQ \max_{\bx\in \vH_{i,b}} t(\bx) \EQ t(\sink(\vH_{i,b}))\;.\] 
We stress that~$t$ is only used in the analysis and $t(\bx)$, and $t(\vH_{i,x})$, are not known to the algorithm.

Every vertex $\bx=(x_1,\dots,x_n)$ of $\vH$ is on the~$n$ facets $\vH_{1,x_1},\vH_{2,x_2},\ldots,\vH_{n,x_n}$. Let $F_1,F_2,\ldots,F_n$ be the ordering of these facets such that 
\[t(F_1)\LE t(F_2)\LE \dots\LE t(F_n)\;.\]
(Ties are possible as a vertex may be the sink of several facets.) We have $F_i=\vH_{\sigma(i),x_{\sigma(i)}}$ for some permutation~$\sigma$ that is of course not known to the algorithm. Let $\by_i=\sink(F_i)$, for $i\in[n]$.

Suppose that the Random-Facet algorithm chose the facet~$F_i$. The first recursive call then returns $\by_i = \sink(F_i)$. If $\by_i$ is also the sink of the whole cube $\vH$, we are done. Otherwise, the algorithm moves from $\by_i$ to $\bz =\by_i\oplus \be_{\sigma(i)}$. We now have:
\[ t(F_1)\LE t(F_2) \LE \dots \LE t(F_i) \EQ t(\by_i) \LT t(\bz)\;.\]
We can thus infer that $x_{\sigma(j)}\ne z_{\sigma(j)}$, for $j\in [i]$. Furthermore, the bits whose indices are $\sigma(1),\ldots,\allowbreak \sigma(i)$ in the vertices visited from now on by the algorithm can never change! Thus, the second recursive call, if it is made, is effectively on the subcube $\vH_{\{(\sigma(1),z_{\sigma(1)}),\dots,(\sigma(i),z_{\sigma(i)})\}}$ whose dimension is $n-i$ and not~$n$.

Let $F(n)$ be an upper bound on the \emph{expected} number of vertices visited by the algorithm when called on an AUSO of dimension~$n$. This is also an upper bound on the number of vertex queries made by the algorithm. The analysis above implies the following recurrence relation:
\begin{equation}
\begin{aligned}
    F(n) &\EQ F(n-1) + \frac{1}{n} \sum_{i=1}^{n} F(n-i) \quad,\quad n>0 \;,\\[-7pt]
    F(0) &\EQ 1 \;.
    \end{aligned}\label{F-n}
\end{equation}

A nice, and known, solution of this recurrence relation appears in Appendix~\ref{A-basic-recurrence}. It follows from Lemma~\ref{L-f} there that $F(n)\le e^{2\sqrt{n}}$. (Note that $F(n)=f_1(n)$, where $f_1(n)$ is defined in the lemma.)



\section{A fictional algorithm}\label{S-fiction}

As a step towards obtaining an improved version of the Random-Facet algorithm, it is convenient to consider the following \emph{fictional}, i.e., non-realizable, algorithm. 

Suppose that $\bx=(x_1,x_2,\ldots,x_n)$ is a vertex of an AUSO $\vH$. We say that a facet $\vH_{j,b}$, where $j\in[n]$ and $b\in\{0,1\}$, is \emph{$\bx$-active}, i.e., active with respect to~$\bx$, if and only if $t(\bx)\le t(\vH_{j,b})$, or equivalently if there exists a vertex $\bx_{j,b}\in \vH_{j,b}$ such that $t(\bx)\le t(\bx_{j,b})$. We say that $\bx_{j,b}$ \emph{witnesses} the $\bx$-activity of $\vH_{j,b}$. If $\vH_{j,b}$ is not $\bx$-active, then no vertex in $\vH_{j,b}$ can be reached using a directed path from~$\bx$. The notion of active facets was introduced in the context of linear programming by Kalai \cite{Kalai92}.

If $\bx=(x_1,x_2,\dots,x_n)$, then the facets $\vH_{1,x_1},\vH_{2,x_2},\dots,\vH_{n,x_n}$ are automatically $\bx$-active, as we can choose $\bx_{j,b}=\bx$. The Random-Facet algorithm of the previous section begins by choosing one of them at random. Let $\bar{\bx}=(\bar{x}_1,\bar{x}_2,\dots,\bar{x}_n)$. Some of the facets $\vH_{1,\bar{x}_1},\vH_{2,\bar{x}_2},\dots,\vH_{n,\bar{x}_n}$ may also be $\bx$-active. Thus, in general, there might be up to $2n$ active facets with respect to~$\bx$. If $\vH_{j,\bar{x}_j}$ is not $\bx$-active, then any algorithm that only moves, or jumps, from a vertex to a vertex with a higher ranking in the topological ordering would never change the $j$-th coordinate of the vertices visited. In such a case, as in the previous section, the algorithm is effectively working on an AUSO whose dimension is smaller than~$n$.

Suppose that all $2n$ facets $\vH_{j,b}$, where $j\in[n]$ and $b\in\{0,1\}$ are $\bx$-active, and that $\bx_{j,b}\in \vH_{j,b}$, is such that $t(\bx)\le t(\bx_{j,b})$, for every $j\in[n]$ and $b\in\{0,1\}$. (If $b=x_j$, we can let $\bx_{j,b}=\bx$.) Furthermore, suppose that, using some \emph{magical} means, all the witnessing vertices $\bx_{j,b}$ are known to the algorithm. The fictional algorithm now has $2n$ possible choices, instead of only~$n$. It chooses uniformly at random, an active facet $\vH_{j,b}$, where $j\in[n]$, $b\in\{0,1\}$, \emph{jumps}, if needed, from~$\bx$ to~$\bx_{j,b}$, and recursively finds the sink~$\by$ of $\vH_{j,b}$. If~$\by$ is also the sink of~$\vH$ we are done, otherwise, the algorithm moves to $\bz=\by\oplus \be_j$ and recursively finds the sink of~$\vH$ starting from~$\bz$.

Let $F_1,F_2,\ldots,F_{2n}$ be the $2n$ facets of $\vH$, all assumed to be $\bx$-active, arranged such that 
\[t(F_1)\LE t(F_2)\LE \dots\LE t(F_n) \LT t(F_{n+1}) \EQ \dots \EQ t(F_{2n}) \EQ t(\sink(\vH))\;.\]
Indeed, if $\bs=(s_1,s_2,\ldots,s_n)=\sink(\vH)$, then $t(\vH_{i,\bar{s}_i})<t(\vH_{i,s_i})=t(\bs)=2^n$, for $i\in[n]$.

The fictional algorithm chooses, uniformly at random, a facet $F_i$, where $i\in[2n]$, jumps to a vertex~$\bx_i$ on this facet, and recursively finds the sink of the chosen facet. If $n+1\le i\le 2n$, then this recursive call returns the sink~$\bs$ of~$\vH$ and we are done. Otherwise, as in the previous analysis, the first recursive call returns the sink $\by$ of~$F_i$. If $F_i=\vH_{j,b}$, the algorithm moves along a directed edge from~$\by$ to~$\bz=\by\oplus \be_j$, and makes a second recursive call starting from~$\bz$. As in the previous section, this recursive call is effectively on a hypercube of dimension $n-i$. 

This leads to the following modified recurrence relation bounding the complexity of the fictional algorithm:
\begin{equation}
\begin{aligned}
    F(n) &\EQ F(n-1) + \frac{1}{2n} \sum_{i=1}^{n} F(n-i) \quad,\quad n>0 \;,\\[-7pt]
    F(0) &\EQ 1 \;.
    \end{aligned}\label{F-2n}
\end{equation}
The subtle change compared to~\eqref{F-n} is that the sum, which still only goes from~$1$ to~$n$, is now multiplied by $\frac{1}{2n}$ and not $\frac{1}{n}$. Using Lemma~\ref{L-f}, with $\alpha=\frac{1}{2}$, we get that $F(n)\le e^{\sqrt{2n}}$. Thus, the coefficient of $\sqrt{n}$ in the exponent is improved from~$2$ to~$\sqrt{2}$.


\section{The improved algorithm}\label{S-improved}

We next describe a non-fictional improved algorithm. The fictional algorithm suggests that a key feature of an improved algorithm should be the ability to quickly discover many active facets, thus enlarging the pool of active facets from which a random facet is chosen. Following Hansen and Zwick~\cite{HaZw15}, we actually change our perspective and consider the discovery of new active facets to be the main goal of the algorithm. Reaching the sink is, of course, a very desirable side-effect. 

In the case of AUSOs, it turns out to be especially useful to discover complementary pairs of active facets, i.e., pairs of the form $\vH_{j,0}$ and $\vH_{j,1}$, for $j\in[n]$. This is one of the key features of the new algorithm that leads to a much simpler description and analysis compared to the more general algorithm of Hansen and Zwick \cite{HaZw15}.

In the previous section we said that a facet $\vH_{j,b}$ is $\bx$-active if and only if there exists $\bx_{j,b}\in \vH_{j,b}$ such that $t(\bx)\le t(\bx_{j,b})$, where~$t$ is a fixed, but arbitrary, topological ordering of the vertices of~$\vH$. To get an implementable algorithm we need to slightly change this definition.

\begin{definition}[Active facets]
    Let $\vH$ be an AUSO of dimension~$n$ and let $\bx\in\cube$. We say that~$\vH_{j,b}$ is an \emph{$\bx$-active} facet if and only if there is a directed path in~$\vH$ from~$\bx$ to a vertex $\bx_{j,b}\in\vH_{j,b}$.
\end{definition}

If $\vH_{j,b}$ is active according to this definition, then it is also active with respect to any topological ordering~$t$. This definition is more suitable for our purposes as an active facet can now be witnessed by a concrete directed path in~$\vH$ found by the algorithm. (The algorithm does not explicitly maintain the witnessing paths, only their endpoints.)

For every vertex $\bx=(x_1,x_2,\ldots,x_n)$ of~$\vH$ we immediately have $n$ \emph{trivial} active facets $\vH_{i,x_i}$, for $i\in [n]$. Let $\bar{\bx}=(\bar{x}_1,\bar{x}_2,\ldots,\bar{x}_n)=\bx\oplus(1,1,\dots,1)$. An active facet $\vH_{i,\bar{x}_i}$, for some $i\in[n]$, is said to be a \emph{non-trivial} $\bx$-active facet. The facets~$\vH_{i,0}$ and~$\vH_{i,1}$, for some $i\in[n]$ are said to form a \emph{complementary} pair. 

The improved algorithm uses an increasing integral sequence $A=A_c=(a_1,a_2,\ldots)$, with $a_1=1$ and $a_i=\floor{ca_{i-1}+1}$, for $i>1$, for some constant $c>1$. With this definition we have $\frac{a_{i}-1}{c} \le a_{i-1}< a_i$, for $i>1$. The role of the sequence~$A$ and the importance of the condition $\frac{a_{i}-1}{c} \le a_{i-1}< a_i$ will be explained shortly. Note that asymptotically we have $a_i\approx c^i$.
\footnote{Hansen and Zwick \cite{HaZw15} used the simpler sequence $a_i=\ceil{c^i}$ which unfortunately does not satisfy the required condition. This small technical problem can be easily fixed by using the definition of~$A=A_c$ given here.}

If $k>1$, we let 
\[\floor{k}_A = \max\{ a_i \mid a_i<k\,,\, i\ge 1\}\;,\] 
be the largest element in the sequence~$A$ which is strictly smaller than~$k$. Note that if $a_i<k\le a_{i+1}$, then the sequence $\floor{k}_A, \floor{\floor{k}_A}_A,\dots$ is just the sequence $a_i,a_{i-1},\ldots$. For conciseness, we also use the notation $m_k=\floor{k}_A$ when the sequence~$A$ is clear from the context.



The improved algorithm $IRF(\vH,\bx,k)=IRF_A(\vH,\bx,k)$, where~$A$ is the sequence used, receives an AUSO~$\vH$ of dimension~$n$, an initial vertex $\bx\in \cube$ of~$\vH$, and the number~$k$ of complementary pairs of $\bx$-active facets that the algorithm should find. The algorithm either returns the sink of~$\vH$, or a list of $2k$ distinct $\bx$-active facets $F_1,F_2,\ldots,F_{2k}$, forming~$k$ complementary pairs, along with witnesses $\bx_1,\bx_2,\ldots,\bx_{2k}$, where $\bx_i\in F_i$ is a vertex that can be reached from~$\bx$ by a directed path in~$\vH$.
As there are at most~$2n$ facets, and in particular at most~$2n$ $\bx$-active facets, $IRF(\vH,\bx,n+1)$ must return the sink of~$\vH$.

At a high level, the improved algorithm works as follows. If $k=1$, the algorithm checks whether $\bx$ has any outgoing edges. \footnote{This is the only place in which the algorithm uses a vertex query. All other queries made by the algorithm are edge queries. For simplicity, we count all queries as vertex queries.} If it does not have such edges, then~$\bx$ is the sink of~$H$ and the algorithm returns it. Otherwise, if $\bx\to\bx\oplus \be_i$ is an outgoing edge, the algorithm returns the complementary pair $H_{i,x_i}$ and $H_{i,\bar{x}_i}$ of $\bx$-active facets with the witnesses $\bx$ and $\bx\oplus\be_i$. 

Suppose now that $k>1$. As a first step towards finding~$k$ complementary pairs of $\bx$-active facets, or finding the sink of~$\vH$, the algorithm sets an intermediate goal of finding $m_k=\floor{k}_A<k$ such pairs, or finding the sink of~$H$. This intermediate goal is obtained by a recursive call $IRF(\vH,\bx,m_k)$. This first recursive call either returns the sink of~$\vH$, or a list of $2m_k$ distinct $\bx$-active facets $F_1,F_2,\ldots,F_{2m_k}$, composed of $m_k$ complementary pairs, along with corresponding witnesses $\bx_1,\bx_2,\dots,\bx_{2m_k}$. 

If the first recursive call finds the sink of~$\vH$, then we are of course done. Otherwise, the algorithm chooses a random $\bx$-active facet $F_j$, where~$j$ is chosen uniformly at random from $[2m_k]$. It jumps, if needed, from~$\bx$ to~$\bx_j$ and performs a second recursive call $IRF(F_j,\bx_j,k-1)$ that either finds~$k-1$ complementary pairs of $\bx_j$-active facets $G_1,G_2,\ldots,G_{2(k-1)}$ of~$F_j$, or finds the sink of~$F_j$.

If $k-1$ complementary pairs $G_1,G_2,\ldots,G_{2(k-1)}$ of $\bx_j$-active facets of~$F_j$ are returned by the second recursive call, then $G_1,G_2,\ldots,\allowbreak G_{2(k-1)}$ extended to facets of~$\vH$ by adding the coordinate fixed by~$F_j$, along with the pair containing~$F_j$, form a list of $k$ complementary pairs of $\bx$-active facets that can be returned by the algorithm. (More specifically, if $F_j=H_{f_j,b_j}$, where $f_j\in[n]$, $b_j\in\{0,1\}$, and $G_1,G_2,\ldots,G_{2(k-1)}$ are determined  by the distinct coordinates $g_1,\dots,g_{k-1}\in[n]$, all different from~$f_j$ i.e., $G_{2i-1}=(F_j)_{g_i,0}, G_{2i}=(F_j)_{g_i,1}$, the call can return the complementary pairs determined by coordinates $f_j,g_1,\dots,g_{k-1}$, along with appropriate witnesses.)

Otherwise, the second recursive call finds the sink~$\by$ of~$F_j$ before finding enough pairs of $\bx_j$-active facets. If~$\by$ is also the sink of~$\vH$, we are again done. If not, the algorithm moves along an edge from~$\by$ to~$\bz=\by\oplus \be_{f_j}$, where~$f_j$ is the coordinate fixed by~$F_j$. It then makes a third, and last, recursive call $IRF(\vH,\bz,k)$ that either finds~$k$ complementary pairs of $\bz$-active facets, or finds the sink of~$\vH$. Every $\bz$-active facet is clearly also an $\bx$-active facet, as there is a directed path from~$\bx$ to~$\bz$ in~$\vH$.

The third recursive call $IRF(\vH,\bz,k)$, as stated above, is asked to~find~$k$ complementary pairs of $\bz$-active facets, which as explained are also pairs of $\bx$-active facets. This, however, may be more than necessary, as the first two recursive calls already found some complementary pairs of $\bx$-active facets. In particular, the first recursive call already found $m_k=\floor{k}_A$ of them. Is it then enough to ask the third recursive call to find only $k-m_k$ complementary pairs of $\bz$-active facets? Unfortunately, the answer is no, as the third recursive call may find again some of the pairs already found by the first or second recursive calls. It would be enough, however, to find $k-m_k$ \emph{new} active pairs, or more generally, to complete the collection of complementary $\bx$-active pairs already found by the first two recursive calls to a collection of $k$ distinct $\bx$-active pairs.

In general, we have no guarantee that pairs of active facets found by the first recursive call will not be found again by the second or third recursive calls. However, the analysis below shows that we \emph{do} have a guarantee that \emph{some} active pairs found by the second recursive call will \emph{not} be found again by the third recursive call. This is crucial for the efficiency of the algorithm. We do not know in advance to how many pairs this guarantee applies.

To make sure that all active pairs of facets already discovered by the algorithm are used, we add the following details to the implementation of the algorithm. Each recursive call $IRF(H,\bx,k)$ keeps a collection~$\cF$ of complementary $\bx$-active pairs of facets that were already discovered, either by this recursive call, or by any descendant recursive call. Initially $\cF=\emptyset$. (If $\bx$ has~$i$ outgoing edges, we can initialize~$\cF$ to the~$i$ corresponding pairs of $\bx$-active facets. This is actually used at the bottom of the recursion when $k=1$.)
When the algorithm traverses an edge $\bx\to \bx\oplus \be_i$, it actually `discovers' the complementary pair $H_{i,0}$ and $H_{i,1}$ of active facets. (As mentioned, this pair may have already been discovered before.)
When a pair of active facets is discovered, a recursive call tries to add it not only to its own collection, but also to the collection of its parent, and in fact to the collection of each one of its ancestors. Each collection has a target size. When this target size is reached, all recursive calls below its level can be aborted.

More precisely, a call $IRF(\vH,\bx,k)$ generates a chain of active recursive calls $IRF(\vH_i,\bx_i,k_i)$, for $i=0,1,\ldots,d$, where~$d$ is the current depth of the recursion. We have $\vH_0=\vH$, $\bx_0=\bx$ and $k_0=k$. For every~$0<i\le d$, either $\vH_i=\vH_{i-1}$ or $\vH_i$ is a facet of $\vH_{i-1}$. Thus, $\vH_j$ is a subcube of $\vH_i$, for every $0\le i<j\le d$. The algorithm maintains the invariant that there is a directed path in~$\vH$ that passes through $\bx_0,\bx_1,\ldots,\bx_d$, in this order. (The algorithm does not need to explicitly maintain this path.) Thus, an $\bx_j$-active facet immediately gives rise to an $\bx_i$-active facet for every $i<j$. (We simply need to add back the coordinates, if any, that were fixed by the recursive calls when moving from~$\vH_i$ to~$\vH_j$.) 

The recursive call $IRF(\vH_i,\bx_i,k_i)$ at depth~$i$ keeps a collection $\cF_i$ of the pairs of $\bx_i$-active facets already discovered and the size of this collection. Initially $\cF_i$ is empty. When the recursive call at depth~$d$ discovers a new pair of $\bx_d$-active facets, this pair is added not only to~$\cF_d$, but also to $\cF_0,\cF_1,\dots,\cF_{d-1}$. (Note that new pair is always discovered by the deepest recursive call.) It is possible that the pair just discovered already appeared in~$\cF_i$, for some $i<d$, in which case it is not added again, and the size of $\cF_i$ remains unchanged.

When $\cF_i$ reaches the target number~$k_i$ of complementary pairs of $\bx_i$-active facets, the goal of the recursive call at depth~$i$ is achieved. The recursive calls at depths $i+1,i+2,\ldots,d$ can be aborted and the recursive call at depth~$i$ can return the collection~$\cF_i$ to its parent, or to the outside caller. 

To implement this nonlocal return mechanism, we let each recursive call maintain its depth and assume that a recursive call at depth~$i$ has access to the collections $\cF_0,\cF_1,\ldots,\cF_i$. A recursive call at depth~$i$ either returns an index~$j\le i$ that indicates that the goal of the depth~$j$ recursive call was achieved, and that this is the smallest such index, or a vertex~$\bz$ which is the sink of the subcube $\vH_i$ on which it was called. If a recursive call at depth~$i$ receives an index $j<i$ as the result of one of the recursive calls it made, it immediately passes this index to its parent. In such a way, the `signal' propagates all the way to the required level.

With these additional details, the correctness of the algorithm is essentially self-evident. We next turn our attention to the main challenge of analyzing the algorithm. 


The crucial point of the analysis is the following. Assume, for the sake of the analysis, that the $2m_k$ facets $F_1,F_2,\ldots,F_{2m_k}$, where $m_k=\floor{k}_A$, composed of $m_k$ complementary pairs, returned by the first recursive call, if it did not find the sink of~$\vH$, are ordered such that 

\[t(F_1)\LE t(F_2)\LE \dots\LE t(F_{m_k}) \LT t(F_{{m_k}+1}) \EQ \dots \EQ t(F_{2{m_k}}) \EQ t(\sink(\vH))\;,\]
where $t$ is a fixed, but arbitrary, topological ordering of~$\vH$ used only in the analysis.
Let $\by_i$ be the sink of~$F_i$, for $i\in[2{m_k}]$. If the second recursive call is on~$F_i$, where~$i$ is chosen uniformly at random, and the second recursive call finds the sink~$\by_i$ of~$F_i$, and~$\by_i$ is not the sink of~$\vH$, then the algorithm moves from $\by_i$ to $\bz=\by\oplus \be_{f_i}$, and as in the previous sections we have
\[ t(F_1)\LE t(F_2) \LE \dots \LE t(F_i) \EQ t(\by_i) \LT t(\bz)\;.\]
That means that $\bz$, and all vertices that can be reached from it, do not belong to $F_1,F_2,\ldots,F_i$. Furthermore, if $f_1,f_2,\dots,f_i$ are the coordinates defining $F_1,F_2,\ldots,F_i$, then these coordinates are all frozen. To meet the goal of discovering $k$ complementary pairs of $\bx$-active facets, it is thus enough to find $k-i$ pairs of $\bz$-active facets, as these pairs are guaranteed to be different from the pairs containing $F_1,F_2,\ldots,F_i$. Thus, without the algorithm knowing it, the third recursive call is effectively of the form $IRF(\vH',\bz,k-i)$, where $\vH'$ is an $(n-i)$-dimensional subcube of~$\vH$. The analysis of the algorithm relies crucially on this point.

There is actually one difference between the third recursive call $IRF(H,\bz,k)$ and the `effective' recursive call $IRF(H',\bz,k-i)$ that we would like to use in the analysis, even if the third recursive call does effectively work on~$H'$, an $(n-i)$-dimensional subcube of~$\vH$, and will indeed terminate when finding $k-i$ pairs of active facets. The actual recursive call sets an intermediate target of finding $m_k=\floor{k}_A$ active pairs, while the target set by $IRF(H',\bz,k-i)$ is only $m_{k-i}=\floor{k-i}_A$, which might be smaller. This is exactly the reason for using the fixed sequence~$A$. Suppose that $\floor{k}_A=a_j$ and that $\floor{k-i}_A=a_{j'}$, for some $j'\le j$. Assume that $j'<j$. The actual recursive call $IRF(H,\bz,k)$ immediately initiates a sequence of recursive calls $IRF(H,\bz,a_j)$, $IRF(H,\bz,a_{j-1})$, \dots until reaching $IRF(H,\bz,a_{j'})$. When $IRF(H,\bz,a_{j'})$ terminates, it returns control to $IRF(H,\bz,a_{j'+1})$. Since $a_{j'+1}\ge k-i$, $IRF(H,\bz,a_{j'+1})$ is effectively equivalent to $IRF(\vH',\bz,k-i)$. (Both will either find the sink of the cube or will terminate after finding at most $k-i$ active pairs of facets.) The time spent in the other recursive calls of the chain is negligible and does not involve any oracle calls.

Let $F(k)=F_A(k)$ be an upper bound on the expected number of vertices visited by the algorithm until either finding $k$ complementary pairs of active facets with respect to the starting vertex, or finding the sink of the input AUSO. (The improved algorithm, unlike the original algorithm, may return a vertex several times. Each visit to a vertex is counted separately.) As mentioned, the number of vertices visited is also a bound on the number of vertex queries made by the algorithm. An interesting feature of the analysis given here is that the upper bound~$F(k)$ depends only on~$k$, and not on the dimension~$n$ of the input hypercube. This makes the analysis much simpler than the analyses given in \cite{HaZw15} and \cite{Zwick26}. 

By the above discussion, we get the following recurrence relation for $F(k)=F_A(k)$, where $m_k=\floor{k}_A$:

\begin{equation}
\begin{aligned}
    F(k) &\EQ F(m_k) + F(k-1) + \frac{1}{2m_k} \sum_{i=1}^{m_k} F(k-i) \quad,\quad k>1 \;,\\[-5pt]
    F(1) &\EQ 1 \;.
    \end{aligned}\label{F-eq}
\end{equation}

This recurrence relation is similar to the recurrence relations~\eqref{F-n} and~\eqref{F-2n} appearing in the previous two sections, and in Lemma~\ref{L-f}, with two main differences. The first is an additional term $F(m_k)$. The second is that the last term involves $m_k$ instead of~$k$. Another minor difference is that the basis of the recurrence is now $k=1$ instead of $k=0$.

In the next section we solve this recurrence relation and also explain how the constant~$c$ that appears in the definition of the sequence~$A$ is chosen. With the right choice of~$c$, we show that the term $F(m_k)$ is negligible, and that the average $\frac{1}{2m_k} \sum_{i=1}^{m_k} F(k-i)$ is not far from the average $\frac{1}{2k} \sum_{i=1}^{k} F(k-i)$ appearing in~\eqref{F-2n}. Thus, the solution of~\eqref{F-eq} is not much larger than $e^{\sqrt{2k}}$.

 \section{Solving the recurrence relation}\label{S-solve}

Throughout this section, we let $F(k)=F_A(k)$ be the function defined in Equation~\eqref{F-eq}. We now define two new functions $G(k)=G_A(k)$ and $H(k)=H_A(k)$ as follows, where again $m_k=\floor{k}_A$:
\begin{align}
  &\begin{aligned}
    G(k) &\EQ G(k-1) + \frac{1}{2m_k}
      \sum_{i=1}^{m_k} G(k-i)
      \quad,\quad k>1 \;,\\[-8pt]
    G(1) &\EQ 1 \;.
  \end{aligned}
  \label{G}\\[8pt]
  &\begin{aligned}
    H(k) &\EQ H(m_k) + H(k-1)
      \quad,\quad k>1 \;,\\
    H(1) &\EQ 1 \;.
  \end{aligned}
  \label{H}
\end{align}

Note that $G(k)$ includes the last two terms in the definition of~$F(k)$ while $H(k)$ includes the first two terms in the definition of~$F(k)$. Since the definitions of $G(k)$ and $H(k)$ contain the terms $G(k-1)$ and $H(k-1)$, and since all terms appearing in the definitions are nonnegative, we immediately get that $G(k)$ and $H(k)$ are monotonically increasing functions of~$k$. The following lemma is similar to lemmas appearing in G{\"a}rtner \cite{Gartner95} and Hansen and Zwick \cite{HaZw15}.

\begin{lemma}\label{L-FGH}
    We have $F(k)\le G(k)H(k)$, for every $k\ge 1$.
\end{lemma}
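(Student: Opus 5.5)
We prove $F(k)\le G(k)H(k)$ by strong induction on $k$. The base case is immediate, since $F(1)=1=G(1)H(1)$. For $k>1$ we assume the claim for every $k'<k$, and apply it to the three kinds of terms in~\eqref{F-eq}: $F(m_k)$, $F(k-1)$, and $F(k-i)$ for $1\le i\le m_k$. All of these arguments are strictly smaller than~$k$ (recall $m_k<k$), so
\[
F(k)\LE G(m_k)H(m_k) + G(k-1)H(k-1) + \frac{1}{2m_k}\sum_{i=1}^{m_k} G(k-i)H(k-i)\;.
\]

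Next we replace every factor by an upper bound with a fixed argument, using the monotonicity of $G$ and $H$ noted above. In the first term, $m_k\le k-1$ gives $G(m_k)\le G(k-1)$. In the sum, $k-i\le k-1$ gives $H(k-i)\le H(k-1)$ for every $i$. Hence
\[
F(k)\LE G(k-1)H(m_k) + G(k-1)H(k-1) + H(k-1)\cdot\frac{1}{2m_k}\sum_{i=1}^{m_k} G(k-i)\;.
\]

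Finally we compare with the expansion of the product. By~\eqref{G} and~\eqref{H},
\[
G(k)H(k) \EQ \Bigl(G(k-1)+\tfrac{1}{2m_k}\textstyle\sum_{i=1}^{m_k}G(k-i)\Bigr)\bigl(H(m_k)+H(k-1)\bigr)\;.
\]
This expansion contains the three terms above: $G(k-1)H(m_k)$, $G(k-1)H(k-1)$, and $H(k-1)\cdot\frac{1}{2m_k}\sum_{i}G(k-i)$. It also contains one extra term, $H(m_k)\cdot\frac{1}{2m_k}\sum_i G(k-i)$, which is nonnegative. Therefore $F(k)\le G(k)H(k)$.

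The only step needing care is the choice of which factor to bound by monotonicity in each term, so that each term lines up with a distinct cross term of the product. Bounding $G(m_k)$ up to $G(k-1)$ in the first term, and $H(k-i)$ up to $H(k-1)$ in the sum, achieves this, and no further obstacle arises.
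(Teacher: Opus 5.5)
Your proof is correct and is essentially the paper's argument: induction, then monotonicity to bound $G(m_k)\le G(k-1)$ and the factors $H(k-i)$, then matching against the expansion of $G(k)H(k)$. The only cosmetic difference is that the paper bounds $H(k-i)\le H(k)$ directly, so the sum term becomes exactly $(G(k)-G(k-1))H(k)$. You instead bound by $H(k-1)$ and then drop the nonnegative cross term $H(m_k)\cdot\frac{1}{2m_k}\sum_{i=1}^{m_k} G(k-i)$, which amounts to the same thing since $H(k)=H(m_k)+H(k-1)$.
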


\begin{proof}
    The proof is by an easy induction on~$k$. For $k=1$, we have $F(1)=G(1)=H(1)=1$. Suppose that the claim holds for all values less than~$k$. Then, with $m_k=\floor{k}_A$ we have:
    \begin{alignat*}{2}
  F(k)
  &\EQ F(m_k) + F(k-1) + \frac{1}{2m_k}\sum_{i=1}^{m_k}F(k-i) 
  && \quad \text{(definition of $F$)} \\
  &\LE G(m_k)H(m_k) + G(k-1)H(k-1) + \frac{1}{2m_k}\sum_{i=1}^{m_k}G(k-i)H(k-i)
  && \quad \text{(induction hypothesis)} \\
  &\LE G(k-1)(H(m_k)+H(k-1)) + \left(\frac{1}{2m_k}\sum_{i=1}^{m_k}G(k-i)\right)H(k)
  && \quad
     \text{($G$ and $H$ are monotone)} \\
  &\EQ G(k-1)H(k) + (G(k)-G(k-1))H(k)
  && \quad \text{(definitions of $G$ and $H$)} \\
  &\EQ G(k)H(k)
  && \tag*{\qedhere}
\end{alignat*}
\end{proof}

It is thus enough to bound each of the functions $G(k)$ and $H(k)$ separately. To bound $G(k)$, we use the function $g(k)=g_\alpha(k)$, where $\alpha>0$, defined as:
\begin{equation}
\begin{aligned}
    g(k) &\EQ g(k-1) + \frac{\alpha}{k-1}\sum_{i=1}^{k-1} g(k-i) \quad,\quad k>1 \;,\\[-6pt]
    g(1) &\EQ 1 \;.
    \end{aligned}\label{g}
\end{equation}
Lemma~\ref{L-g} of Appendix~\ref{A-basic-recurrence} shows that $g_\alpha(k)$ is a very close relative of the recurrence relations defined in Equations~\eqref{F-n} and \eqref{F-2n} and that essentially the same analysis shows that $g_\alpha(k)< e^{2\sqrt{\alpha (k-1)}}$.

\begin{lemma}\label{L-G}
    If $c>1$ and $k > \floor{k}_A\ge \frac{k-1}{c}$, for $k>1$, then $G_A(k)\le g_{c/2}(k)$, where $g_{c/2}(k)$ is the function defined in~\eqref{g} and in Lemma~\ref{L-g}. Thus $G_A(k) < e^{2\sqrt{\frac{c}{2}(k-1)}} = e^{\sqrt{2c(k-1)}}$, for $k\ge 1$.
\end{lemma}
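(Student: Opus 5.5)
We prove $G_A(k)\le g_{c/2}(k)$ by strong induction on $k$. The exponential bound then follows from Lemma~\ref{L-g}, which gives $g_{c/2}(k)<e^{2\sqrt{\frac{c}{2}(k-1)}}=e^{\sqrt{2c(k-1)}}$. For $k=1$ the exponential bound is the trivial $1\le e^0$, since $G(1)=g(1)=1$. We first record that $g=g_\alpha$ is monotonically increasing: every term in~\eqref{g} is nonnegative and includes $g(k-1)$.

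For the inductive step, fix $k>1$ and write $m=m_k=\floor{k}_A$, so $m\le k-1$. By the definition~\eqref{G} and the induction hypothesis applied to $k-1,\dots,k-m$ (all $\ge 1$),
\[
G(k)\LE g(k-1)+\frac{1}{2m}\sum_{i=1}^{m} g(k-i).
\]
It therefore suffices to show
\[
\frac{1}{2m}\sum_{i=1}^{m} g(k-i)\LE \frac{c}{2(k-1)}\sum_{i=1}^{k-1} g(k-i).
\]
The key observation is that the left side is an average of the $m$ \emph{largest} values among $g(1),\dots,g(k-1)$, namely $g(k-1),\dots,g(k-m)$. By monotonicity of $g$, this average is at least the average of all $k-1$ values, but we need the reverse direction, so we do not compare averages. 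Instead we drop the normalization and use the crude bound
\[
\sum_{i=1}^{m} g(k-i)\LE \sum_{i=1}^{k-1} g(k-i),
\]
which holds because all terms are nonnegative. Combined with the hypothesis $m\ge \frac{k-1}{c}$, which gives $\frac{1}{2m}\le\frac{c}{2(k-1)}$, this yields exactly the needed inequality. The induction then closes, since the right-hand side is precisely $g_{c/2}(k)$ by~\eqref{g}.

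The main thing to check is that no step secretly requires comparing averages, which would go the wrong way. The argument above instead trades the factor $m$ against $k-1$ using only $m\ge (k-1)/c$ and enlarges the sum. This is also where the condition $\frac{a_i-1}{c}\le a_{i-1}$ on the sequence $A$ is used: if $a_{i-1}<k\le a_i$, then $m_k=a_{i-1}\ge \frac{a_i-1}{c}\ge\frac{k-1}{c}$.
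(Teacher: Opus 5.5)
Your proof is correct and is essentially the paper's argument: enlarge the sum from $m_k$ to $k-1$ terms, replace $\frac{1}{2m_k}$ by $\frac{c}{2(k-1)}$ using $m_k\ge\frac{k-1}{c}$, and apply the induction hypothesis. You apply the hypothesis first and the paper applies it last, which makes no difference, and the monotonicity of $g$ you record is never used. One small point on the final bound: at $k=1$ it holds only with equality ($G(1)=1=e^0$), so your ``$\le$'' there is the correct statement, and Lemma~\ref{L-g} as written gives $\le$ rather than a strict inequality for the $e^{2\sqrt{\alpha(k-1)}}$ bound.
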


\begin{proof}
    Let $G(k)=G_A(k)$ and $g(k)=g_{c/2}(k)$. We prove the inequality by induction on~$k$. As the basis of the induction we have $G(1)=g(1)=1$. We again let $m_k=\floor{k}_A$. For $k>1$ we then have:
    \begin{alignat*}{2}
G(k)
  &\EQ G(k-1)
     + \frac{1}{2m_k}
       \sum_{i=1}^{m_k} G(k-i)
  && \qquad \text{(definition of \(G\))} \\
  &\LE G(k-1)
     + \frac{1}{2m_k}
       \sum_{i=1}^{k-1} G(k-i)
  && \qquad \text{(summing more terms)} \\
  &\LE G(k-1)
     + \frac{c}{2(k-1)}
       \sum_{i=1}^{k-1} G(k-i)
  && \qquad
     \text{(since $\textstyle m_k=\floor{k}_A \ge \frac{k-1}{c}$)} \\
  &\LE g(k-1)
     + \frac{c}{2(k-1)}
       \sum_{i=1}^{k-1} g(k-i)
  && \qquad \text{(induction hypothesis)} \\
  &\EQ g(k)
  && \qquad \text{(definition of \(g\)).} \tag*{\qedhere}
\end{alignat*}

\end{proof}

We next move to the analysis of $H(k)=H_A(k)$. The analysis give here is similar to an analysis given in Hansen and Zwick \cite{HaZw15} for a slightly different function.

\begin{lemma}\label{L-H-closed}
    Let $A=(a_1,a_2,\dots)$ be an increasing integral sequence with $a_1=1$. Then, the function $H(k)=H_A(k)$ satisfies
    {
    \addtolength{\leftmargini}{0.5em}    
    \begin{enumerate}
    \renewcommand{\labelenumi}{\normalfont(\roman{enumi})}
    \addtolength{\labelsep}{0.3em}
        \item $H(k) \EQ j\, H(a_{i-1}) + H(k-j) \quad,\quad a_{i-1}< k\le a_i \;,\; 0\le j\le k-a_{i-1}$
        \item $H(a_i) \EQ (a_i-a_{i-1}+1)\, H(a_{i-1}) \quad,\quad i>1$
        \item $\displaystyle H(a_i) \EQ \prod_{j=1}^{i-1} (a_{j+1}-a_{j}+1) \quad,\quad i>1$
    \end{enumerate}
    }
\end{lemma}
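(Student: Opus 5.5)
Part (i) is the core of the lemma, and I will prove it by induction on $j$ for fixed $i$ and $k$ with $a_{i-1}<k\le a_i$. The key observation is what $\floor{\cdot}_A$ does on this range. For every $k'$ with $a_{i-1}<k'\le a_i$ we have $m_{k'}=\floor{k'}_A=a_{i-1}$. This holds because $a_{i-1}<k'$, and the next element $a_i$ satisfies $a_i\ge k'$, so it is not strictly smaller than $k'$. Hence for such $k'>1$ the recurrence~\eqref{H} reads $H(k')=H(a_{i-1})+H(k'-1)$.

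The induction runs as follows.
\begin{itemize}
\item For $j=0$, identity (i) is trivial.
\item Suppose (i) holds for some $j<k-a_{i-1}$. Then $k':=k-j$ satisfies $a_{i-1}<k'\le k\le a_i$, and also $k'\ge a_{i-1}+1\ge 2$, so $k'>1$.
\item Applying the recurrence at $k'$ gives $H(k-j)=H(a_{i-1})+H(k-j-1)$. Substituting into the inductive hypothesis yields $H(k)=(j+1)H(a_{i-1})+H(k-(j+1))$, as required.
\end{itemize}
I will restrict to $i>1$, so that $a_{i-1}$ is defined; the case $i=1$ is vacuous anyway.

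Part (ii) follows by taking $k=a_i$ and $j=a_i-a_{i-1}$ in (i). This is legitimate since $a_{i-1}<a_i$, the sequence being increasing. We obtain $H(a_i)=(a_i-a_{i-1})H(a_{i-1})+H(a_{i-1})=(a_i-a_{i-1}+1)H(a_{i-1})$.

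Part (iii) then comes from unrolling (ii) by induction on $i$. The base case is $H(a_1)=H(1)=1$, which uses $a_1=1$. This gives $H(a_i)=\prod_{j=1}^{i-1}(a_{j+1}-a_j+1)$ for every $i>1$; the base product for $i=2$ is just $a_2-a_1+1$.

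The only delicate point I expect is the claim $m_{k'}=a_{i-1}$ uniformly over $(a_{i-1},a_i]$, and checking that every intermediate index $k-j$ stays in this interval and stays above $1$. Both follow from the bound $j< k-a_{i-1}$ in the induction step and from integrality of the sequence.
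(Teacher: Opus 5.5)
Your proof is correct and follows the paper's argument: induction on $j$ in (i) using that $\floor{k'}_A=a_{i-1}$ throughout $(a_{i-1},a_i]$, then (ii) by setting $k=a_i$, $j=a_i-a_{i-1}$, and (iii) by unrolling (ii) from $H(a_1)=H(1)=1$. You explicitly check that each intermediate index $k-j$ stays in $(a_{i-1},a_i]$ and above $1$; the paper leaves this implicit, but its induction relies on exactly these facts.
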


\begin{proof}
    We start with (i). For $j=1$, by the definition of~$H$ we have 
    \[H(k)=H(m_k)+H(k-1) \EQ H(a_{i-1})+H(k-1) \;.\] 
    By induction we then have 
    \begin{align*}
        H(k) &\EQ (j-1)H(a_{i-1}) + H(k-j+1)\\
        &\EQ (j-1)H(a_{i-1}) + (H(a_{i-1}) + H(k-j)) \EQ j\, H(a_{i-1}) + H(k-j) \;,
    \end{align*}
    proving (i). (ii) is obtained from~(i) using $k=a_i$ and $j=a_i-a_{i-1}$. Finally, (iii) is obtained from~(ii) by induction, recalling that $H(1)=1$.
\end{proof}

Recall that for a given $c>1$, we defined the sequence $A=A_c=(a_1,a_2,\dots)$ as follows: $a_1=1$ and $a_i=\floor{ca_{i-1}}+1$, for $i>1$. We clearly have $a_i\le ca_{i-1}+1$ and it follows easily by induction that $a_i\le\frac{c^i-1}{c-1}$, for $i\ge 1$. (Indeed, $c\frac{c^{i-1}-1}{c-1}+1 = \frac{c^i-1}{c-1}$.)

\begin{lemma}\label{L-H-ai}
    If $A=A_c$ is as defined above, then 
    \[ H(a_i) \LE e^{\frac{1}{c-1}}c^{\binom{i}{2}} \quad,\quad i\ge 1\;.\]
\end{lemma}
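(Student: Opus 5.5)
We start from Lemma~\ref{L-H-closed}(iii), which gives
\[
H(a_i)=\prod_{j=1}^{i-1}(a_{j+1}-a_j+1).
\]
The task is therefore to bound each factor $a_{j+1}-a_j+1$ by $c^{j}$ times a correction factor whose product over all $j$ converges to at most $e^{\frac{1}{c-1}}$. Since $\sum_{j=1}^{i-1} j=\binom{i}{2}$, this yields the claim. For $i=1$ the product is empty, so $H(a_1)=1\le e^{1/(c-1)}$ and there is nothing to prove.

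\textbf{Step 1: bound the difference.} From $a_{j+1}=\floor{ca_j}+1\le ca_j+1$ we get
\[
a_{j+1}-a_j+1\le (c-1)a_j+2.
\]
Combining this with $a_j\le\frac{c^j-1}{c-1}$ (established just before the lemma) gives
\[
a_{j+1}-a_j+1\le c^j-1+2=c^j+1=c^j\bigl(1+c^{-j}\bigr).
\]

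\textbf{Step 2: take the product.} Using $1+x\le e^x$,
\[
H(a_i)\le c^{\binom{i}{2}}\prod_{j=1}^{i-1}\bigl(1+c^{-j}\bigr)\le c^{\binom{i}{2}}\exp\Bigl(\sum_{j\ge1}c^{-j}\Bigr)=c^{\binom{i}{2}}e^{\frac{1}{c-1}},
\]
since $\sum_{j\ge1}c^{-j}=\frac{1/c}{1-1/c}=\frac{1}{c-1}$.

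The only delicate point is Step~1. The additive $+1$ in $a_{j+1}-a_j+1$, together with the $+1$ in the definition of the sequence, must be absorbed into a factor $1+c^{-j}$, which is summable, rather than into a constant factor per level, which would spoil the bound. The explicit estimate $a_j\le\frac{c^j-1}{c-1}$, with its $-1$, is exactly what makes this work. I would double-check that the $-1$ there cancels one of the two added units, leaving $c^j+1$.
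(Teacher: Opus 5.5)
Your proposal is correct and follows the paper's proof essentially line by line. You bound each factor of Lemma~\ref{L-H-closed}(iii) by $c^j+1$ using $a_{j+1}\le ca_j+1$ and $a_j\le\frac{c^j-1}{c-1}$, pull out $c^{\binom{i}{2}}$, and use $1+x\le e^x$ with the geometric series; your separate check of $i=1$, where (iii) is stated only for $i>1$, is a small welcome addition.
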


\begin{proof}
    Since $a_i\le\frac{c^i-1}{c-1}$,  we have
    \[ a_{j+1}-a_{j}+1 \LE (c-1)a_{j}+2 \LE (c-1)\frac{c^{j}-1}{c-1}+2 \EQ c^{j}+1\;.\]
    By Lemma~\ref{L-H-closed}(iii), and using $1+x\le e^{x}$ and $\sum_{j=1}^\infty x^{-j}=\frac{1}{x-1}$, we have
    \begin{align*}
        H(a_i) &\EQ \prod_{j=1}^{i-1} (a_{j+1}-a_{j}+1) \LE \prod_{j=1}^{i-1}(c^j+1) \EQ c^{\sum_{j=1}^{i-1}j} \prod_{j=1}^{i-1} (1+c^{-j}) \\
        &\LE c^{\binom{i}{2}} \prod_{j=1}^{i-1}e^{c^{-j}} \EQ c^{\binom{i}{2}} e^{\sum_{j=1}^{i-1}c^{-j}} \LE c^{\binom{i}{2}} e^{\sum_{j=1}^{\infty}c^{-j}} \EQ c^{\binom{i}{2}}e^{\frac{1}{c-1}} \;. \qedhere
    \end{align*}
\end{proof}

\begin{corollary}\label{C-H}
    If $A=A_c$ is as defined above, then 
    \[ H(k) \LE c\,e^{\frac{1}{c-1}}\,k^{3/2}\, e^{\frac{\log^2k}{2\log c}} \quad,\quad k\ge 1\;.\]
\end{corollary}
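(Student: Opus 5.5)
Since $H$ is monotonically increasing, the plan is to sandwich $k$ between consecutive elements of $A$ and use Lemma~\ref{L-H-ai}. For $k=1$ the bound is trivial, since the right-hand side is at least $c\,e^{1/(c-1)}>1=H(1)$. For $k>1$, let $i\ge 2$ be the index with $a_{i-1}<k\le a_i$. Then $H(k)\le H(a_i)\le e^{\frac{1}{c-1}}c^{\binom{i}{2}}$, and the remaining task is to express $c^{\binom{i}{2}}$ in terms of~$k$.

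To do this I need a lower bound relating $a_{i-1}$ to $c$. From $a_j=\floor{ca_{j-1}}+1> ca_{j-1}$ and $a_1=1$, induction gives $a_j\ge c^{j-1}$. Hence $k>a_{i-1}\ge c^{i-2}$, so $i-2<\log k/\log c$. Writing $t=i-2$, we have
\[ \binom{i}{2}=\frac{(t+2)(t+1)}{2}=\frac{t^2}{2}+\frac{3t}{2}+1. \]
This gives $c^{\binom{i}{2}}= c\cdot c^{t^2/2}\cdot c^{3t/2}$. Now $c^{t}<k$, so $c^{3t/2}<k^{3/2}$. Also $c^{t^2/2}=e^{t^2\log c/2}< e^{\frac{\log^2 k}{2\log c}}$, using $t\log c<\log k$ (and $t\ge 0$). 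Combining these yields exactly $H(k)\le c\,e^{\frac{1}{c-1}}k^{3/2}e^{\frac{\log^2k}{2\log c}}$.

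The only delicate point is the off-by-one bookkeeping: $k$ exceeds $a_{i-1}$ rather than $a_i$, which is why the extra factor $c\,k^{3/2}$ appears. I must make sure the bound $a_{i-1}\ge c^{i-2}$, and not $c^{i-1}$, is the one used, and that the strict inequality $a_j>ca_{j-1}$ holds for the floor-based definition. It does, since $\floor{x}+1>x$. Everything else is routine algebra.
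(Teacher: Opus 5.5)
Your proof is correct and follows essentially the same route as the paper: bracket $k$ as $a_{i-1}<k\le a_i$, use monotonicity of $H$ and Lemma~\ref{L-H-ai}, and use $a_{i-1}\ge c^{i-2}$ to get $i-2<\log_c k$ before expanding $\binom{i}{2}$. Your separate treatment of $k=1$, where no $a_{i-1}$ exists, handles a boundary case that the paper's proof passes over silently.
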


\begin{proof}
    If $a_{i-1}< k\le a_i$, then $H(k)\le H(a_i)$. It follows easily by induction that $c^{i-1}\le a_i$. Thus $c^{i-2}\le a_{i-1}<k$ and $i< \log_c k + 2$. By Lemma~\ref{L-H-ai} we thus have:
    \[ H(k) \LE e^{\frac{1}{c-1}}c^{\binom{i}{2}} \LE e^{\frac{1}{c-1}}c^{\frac{1}{2}(\log_c k +2)(\log_c k +1)} \EQ e^{\frac{1}{c-1}}c^{\frac12 (\log_c k)^2} c^{\frac32 \log_c k}c \EQ c\,e^{\frac{1}{c-1}}\,k^{3/2}\, e^{\frac{\log^2k}{2\log c}} \;. \qedhere \]
\end{proof}

\begin{corollary}\label{C-2cn}
        For every $c>1$, the expected number of vertices visited by the improved algorithm that uses the sequence $A=A_c$ on an AUSO of dimension~$n$ is at most $e^{\sqrt{2cn}+O_c(\log^2 n)}$.
\end{corollary}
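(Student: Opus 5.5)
The plan is to run the algorithm as $IRF_A(\vH,\bx,n+1)$ and bound its cost by $F(n+1)$, then combine Lemma~\ref{L-FGH}, Lemma~\ref{L-G} and Corollary~\ref{C-H}. As noted when the algorithm was described, an AUSO of dimension~$n$ has at most $2n$ facets, hence at most $n$ complementary pairs of $\bx$-active facets. So the call $IRF_A(\vH,\bx,n+1)$ can never meet its target and must return the sink. By the definition of $F=F_A$ as an upper bound on the expected number of vertices visited (justified by recurrence~\eqref{F-eq}), the expected number of visited vertices is at most $F(n+1)$.

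Next I verify the hypothesis of Lemma~\ref{L-G} for $A=A_c$. For $k>1$, let $m_k=\floor{k}_A=a_{i-1}$, where $a_{i-1}<k\le a_i$. The sequence satisfies $a_{i-1}\ge \frac{a_i-1}{c}\ge\frac{k-1}{c}$, and $m_k<k$ by definition. So Lemma~\ref{L-G} applies and gives
\[
G(n+1)<e^{\sqrt{2cn}} .
\]
Corollary~\ref{C-H} gives
\[
H(n+1)\le c\,e^{\frac1{c-1}}(n+1)^{3/2}e^{\frac{\log^2(n+1)}{2\log c}} .
\]
For fixed $c>1$ this is $e^{O_c(\log^2 n)}$: the prefactor $c\,e^{1/(c-1)}$ is a constant depending only on $c$, and the factor $(n+1)^{3/2}$ contributes only $O(\log n)$ to the exponent.

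Multiplying the two bounds via Lemma~\ref{L-FGH}, $F(n+1)\le G(n+1)H(n+1)\le e^{\sqrt{2cn}+O_c(\log^2 n)}$, which is the claim.

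Most of the argument is bookkeeping. The one point needing care is the justification that the algorithm's cost is governed by $F(n+1)$, which relies on the earlier discussion leading to~\eqref{F-eq}. The key observation there is that $F$ depends only on $k$, so using $k=n+1$ rather than something like $2n$ is valid. I would also note explicitly that the constant in $O_c(\cdot)$ may blow up as $c\to1$, because of the factors $e^{1/(c-1)}$ and $1/\log c$. That is why the bound is stated for each fixed $c$. Letting $c\to 1$ slowly with $n$ would then give the $e^{\sqrt{(2+o(1))n}}$ result.
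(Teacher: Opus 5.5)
Your proof is correct and follows the paper's argument exactly: you bound the cost by $F(n+1)$, then combine Lemma~\ref{L-FGH} with Lemma~\ref{L-G} and Corollary~\ref{C-H}. Your explicit check that $A_c$ satisfies $\floor{k}_A \ge \frac{k-1}{c}$ (via $a_{i-1}\ge\frac{a_i-1}{c}$ and $k\le a_i$) is a detail the paper leaves implicit, and it is right.
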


\begin{proof}
    The expected number of vertices visited by the algorithm is at most $F(n+1)$. By Lemma~\ref{L-FGH} we have $F(n+1)\le G(n+1)H(n+1)$. By Lemma~\ref{L-G} and Corollary~\ref{C-H} we thus have
    \[ F(n+1) \LE G(n+1)H(n+1) \LE e^{\sqrt{2cn}} \, c\,e^{\frac{1}{c-1}}\,(n+1)^{3/2}\, e^{\frac{\log^2(n+1)}{2\log c}} \EQ e^{\sqrt{2cn}+O_c(\log^2 n)} \;.\]
    Note that for a fixed $c>1$, the contribution of $H(n+1)$ is negligible compared to the contribution of~$G(n+1)$. (The notation $O_c(\log^2 n)$ means that the constant factor hidden by the big-$O$ notation depends on~$c$.)
\end{proof}

By choosing $c=1+\varepsilon$, for an arbitrarily small $\varepsilon>0$, we can thus get an algorithm whose complexity is $e^{\sqrt{2(1+\varepsilon)n}+O_\varepsilon(\log^2 n)}$. We can do even better by choosing~$c$ to be a function of~$n$. (Note that we fix~$c$ and the sequence $A=A_C$ at the start, depending on the dimension~$n$ of the AUSO given, and keep the same sequence throughout all the recursive calls.)

\begin{theorem}\label{T-main}
    The expected number of vertices visited by the improved algorithm that uses the sequence $A=A_c$, where $c=1+\frac{\log n}{(2n)^{1/4}}$, on an AUSO of dimension~$n$ is at most 
    \[ e^{\sqrt{2n} + (2n)^{1/4}\log n +O(n^{1/4}/\log n)} \;.\]
\end{theorem}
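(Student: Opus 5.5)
The plan is to reuse the chain of bounds from the proof of Corollary~\ref{C-2cn}, keeping explicit track of how every term depends on $c$, and then substitute $c=1+\varepsilon$ with $\varepsilon=\frac{\log n}{(2n)^{1/4}}$. By Lemma~\ref{L-FGH}, the expected number of vertices visited is at most $F(n+1)\le G(n+1)H(n+1)$. Lemma~\ref{L-G} gives $G(n+1)<e^{\sqrt{2cn}}$. Its hypothesis $k>\floor{k}_A\ge\frac{k-1}{c}$ holds for $A=A_c$ with any $c>1$, by the property $\frac{a_i-1}{c}\le a_{i-1}<a_i$ noted when the sequence was defined. Corollary~\ref{C-H} gives
\[
H(n+1)\le c\,e^{\frac{1}{c-1}}(n+1)^{3/2}e^{\frac{\log^2(n+1)}{2\log c}}.
\]
Both bounds hold for every $c>1$, so no new structural argument is needed. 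What remains is asymptotic bookkeeping with $c$ depending on $n$.

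\textbf{The $G$-term.} Using $\sqrt{1+\varepsilon}\le 1+\varepsilon/2$, we get
\[
\sqrt{2cn}\le\sqrt{2n}+\tfrac{\varepsilon}{2}\sqrt{2n}=\sqrt{2n}+\tfrac12(2n)^{1/4}\log n .
\]

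\textbf{The $H$-term.} Its three factors contribute as follows.
\begin{itemize}
\item $e^{1/(c-1)}=e^{(2n)^{1/4}/\log n}$, which is $e^{O(n^{1/4}/\log n)}$.
\item $c\,(n+1)^{3/2}=e^{O(\log n)}$, which is also absorbed into the error term.
\item For the main factor, use $\log(1+\varepsilon)\ge\varepsilon-\varepsilon^2/2$, so $\frac{1}{\log c}\le\frac{1}{\varepsilon}(1+O(\varepsilon))$. This gives
\[
\frac{\log^2(n+1)}{2\log c}\le\frac{\log^2 n\,(2n)^{1/4}}{2\log n}\bigl(1+O(\varepsilon)\bigr)+O\!\left(\frac{\log n}{n\varepsilon}\right)=\tfrac12(2n)^{1/4}\log n+O(\log^2 n).
\]
\end{itemize}

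\textbf{Combining.} Adding the two halves $\tfrac12(2n)^{1/4}\log n$ from $G$ and $H$ gives exactly the stated $(2n)^{1/4}\log n$. The leftover pieces are $O(\log^2 n)$, $O(\log n)$ and $(2n)^{1/4}/\log n$, all of which are $O(n^{1/4}/\log n)$. The choice of $\varepsilon$ is precisely what balances the $\varepsilon\sqrt{n}$ cost in $G$ against the $\log^2 n/\varepsilon$ cost in $H$.

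The main obstacle is the care needed in this expansion. The second-order corrections must really be lower order, for instance the $O(\varepsilon^2\sqrt n)=O(\log^2 n)$ term dropped from $\sqrt{1+\varepsilon}$ and the $O(\varepsilon)$ relative error in $1/\log c$. The replacement of $\log(n+1)$ by $\log n$ must also be justified. One must check as well that Corollary~\ref{C-H}, proved with $c$ fixed, remains valid when $c$ depends on $n$. It does, since its proof only uses $c>1$ and the sequence is fixed once for the whole run. Finally, we assume $n$ is large enough that $c>1$ (i.e.\ $n\ge 2$), and absorb small $n$ into the $O(\cdot)$.
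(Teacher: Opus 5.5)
Your proposal is correct and follows the paper's proof essentially step for step: it plugs $c=1+\frac{\log n}{(2n)^{1/4}}$ into the bound $F(n+1)\le G(n+1)H(n+1)$ obtained from Lemma~\ref{L-G} and Corollary~\ref{C-H}, uses $\sqrt{1+x}\le 1+\frac{x}{2}$ for the $G$-factor, and balances the two $\frac12(2n)^{1/4}\log n$ contributions. The only cosmetic difference is that the paper bounds $\frac{1}{\log c}$ by the clean inequality $\frac{1}{\log c}\le\frac{1}{c-1}+\frac12$, valid for all $c>1$, instead of your estimate $\log(1+\varepsilon)\ge\varepsilon-\varepsilon^2/2$; both give the same $O(\log^2 n)$ error term.
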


\begin{proof}
    As in the proof of Corollary~\ref{C-2cn} we have 
    \[F(n+1) \le e^{\sqrt{2cn}} \, c\,e^{\frac{1}{c-1}}\,(n+1)^{3/2}\, e^{\frac{\log^2(n+1)}{2\log c}} \;.\]
    We next bound each one of the terms appearing in this expression when $c=1+\frac{\log n}{(2n)^{1/4}}$. The dominant term is of course:
    \[ e^{\sqrt{2cn}} \EQ e^{\sqrt{2n}\sqrt{1+\frac{\log n}{(2n)^{1/4}}}} \LE e^{\sqrt{2n}\left(1+\frac12\frac{\log n}{(2n)^{1/4}}\right)} \EQ e^{\sqrt{2n} + \frac12 (2n)^{1/4}\log n}\;,\]
    where we used the inequality $\sqrt{1+x}\le 1+\frac{x}{2}$, for $x\ge 0$. We next have
    \[ e^{\frac{\log^2(n+1)}{2\log c}} \LE e^{\frac12\log^2(n+1)\left(\frac{(2n)^{1/4}}{\log n}+\frac12\right)} \EQ e^{\frac12 (2n)^{1/4}\log n + O(\log^2n)} \;,\]
    where we used the inequalities $\frac{1}{\log c}\le \frac{1}{c-1}+\frac{1}{2}$ and $\log(n+1)=\log n + \log(1+\frac1n)\le \log n + \frac1n$. Note that $c=1+\frac{\log n}{(2n)^{1/4}}$ was carefully chosen to balance the second order term of $e^{\sqrt{2cn}}$ with the dominant term of $e^{\frac{\log^2(n+1)}{2\log c}}$. Finally,
    \[ e^{\frac{1}{c-1}} \EQ e^{(2n)^{1/4} /\log n} \quad,\quad c(n+1)^{3/2}\EQ e^{O(\log n)}\;. \qedhere \]
\end{proof} 

\section{A variant of the improved algorithm and its analysis}\label{S-tighter}

In this section we obtain a variant of the improved algorithm that uses a fixed sequence~$A$ that does not need to be tuned depending on the dimension~$n$ of the input AUSO. We also obtain a very simple self-contained analysis of this algorithm that does not rely on the results of Appendix~\ref{A-basic-recurrence}.

The sequence used is particularly simple: $a_i=i^4$, for $i\ge 1$. This sequence is chosen as it mimics the asymptotic behavior of the sequence $A_c$, where $c=1+\frac{\log n}{(2n)^{1/4}}$.

We begin with a simple lemma, with a surprisingly simple proof, that bounds the function $G(k)=G_A(k)$ for any increasing sequence~$A$. This lemma can serve as a substitute to the lemmas of Appendix~\ref{A-basic-recurrence}.

\begin{lemma}\label{L-G-bound}
    Let $A=(a_1,a_2,\dots)$ be a fixed increasing sequence with $a_1=1$ and let $m_k=\floor{k}_A$, for $k>1$. Let $G(k)=G_A(k)$ be the function as defined in Equation~\eqref{G} above:
\begin{equation*}
\begin{aligned}
    G(k) &\EQ G(k-1) + \frac{1}{2m_k} \sum_{i=1}^{m_k} G(k-i) \quad,\quad k>1 \;,\\[-8pt]
    G(1) &\EQ 1 \;.\\[-8pt]    
    \end{aligned}
    \end{equation*}
    Then,
    \[ G(k) \LE \exp\left(\sum_{t=2}^k \frac{1}{\sqrt{2m_t}}\right) \quad,\quad k\ge 1\;.\]
\end{lemma}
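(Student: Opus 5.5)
The plan is to compare $G$ with the explicit candidate bound
\[ B(k) \EQ \exp\Big(\sum_{t=2}^k \lambda_t\Big), \qquad \lambda_t=\frac{1}{\sqrt{2m_t}}, \]
and to show that $B$ is a \emph{supersolution} of the recurrence~\eqref{G}. That is, we will show $B(1)=1$ and
\[ B(k) \GE B(k-1) + \frac{1}{2m_k}\sum_{i=1}^{m_k} B(k-i) \qquad (k>1). \]
Given this, $G(k)\le B(k)$ follows by a direct induction on $k$. The right-hand side of~\eqref{G} is monotone in the values $G(k-1),\dots,G(k-m_k)$, all of which have indices in $[1,k-1]$ because $1\le m_k\le k-1$. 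So we may replace each of them by the corresponding value of $B$ using the induction hypothesis, and then apply the supersolution inequality.

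To verify the supersolution inequality, fix $k>1$ and write $m=m_k$ and $\lambda=\lambda_k$. Since $A$ is increasing, $m_t=\floor{t}_A$ is nondecreasing in $t$, so $\lambda_t\ge\lambda$ for every $2\le t\le k$. Hence, for $1\le i\le m$,
\[ B(k-i) \EQ B(k-1)\exp\Big(-\sum_{t=k-i+1}^{k-1}\lambda_t\Big) \LE B(k-1)\,e^{-(i-1)\lambda}. \]
Summing this bound and using a geometric series, it suffices to prove
\[ 1+\frac{1}{2m}\sum_{i=1}^{m} e^{-(i-1)\lambda} \LE 1+\frac{1}{2m(1-e^{-\lambda})} \LE e^{\lambda}, \]
because $B(k)=B(k-1)e^{\lambda}$.

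The last inequality is equivalent to $(e^\lambda-1)(1-e^{-\lambda})\ge \frac{1}{2m}=\lambda^2$. The left-hand side equals $e^\lambda+e^{-\lambda}-2=2(\cosh\lambda-1)$, which is at least $\lambda^2$ by the Taylor series of $\cosh$. This is exactly where the choice $\lambda_k=1/\sqrt{2m_k}$ is forced.

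The only delicate point I anticipate is the direction of monotonicity. We need \emph{upper} bounds on the earlier values $B(k-i)$ relative to $B(k-1)$, and these come from the fact that the increments $\lambda_t$ are nonincreasing in $t$. That property holds precisely because $m_t$ is nondecreasing, so it must be stated explicitly. Everything else is routine.
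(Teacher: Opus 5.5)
Your proof is correct and is essentially the paper's argument. The paper also uses that $\delta_t=1/\sqrt{2m_t}$ is nonincreasing to show $B_k=e^{S_k}$ is a supersolution, bounds the earlier terms by a geometric series, and reduces to the same inequality $\delta^2\le (e^{\delta}-1)(1-e^{-\delta})=e^{\delta}+e^{-\delta}-2$; the only cosmetic difference is that it normalizes by $B_k$ (using $B_{k-i}\le B_k e^{-i\delta_k}$) rather than by $B_{k-1}$.
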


\begin{proof}
    Let 
    \[ \delta_k\EQ\frac{1}{\sqrt{2m_k}} \quad,\quad S_k \EQ \sum_{t=2}^k \delta_t \quad,\quad B_k \EQ e^{S_k} \quad,\quad k\ge 1 \;.\]
    Since $m_k$ is non-decreasing in~$k$, $\delta_k$ is non-increasing in~$k$ and thus
    \[ S_k-S_{k-i} \EQ \sum_{t=k-i+1}^k \delta_t \GE i \delta_k \;.\]
    Thus,
    \[ B_{k-i} \EQ B_k\, e^{-(S_k-S_{k-i})} \LE B_k\, e^{-i\delta_k} \;,\]
    and 
\begin{equation}\label{eq:B-bound}
\begin{aligned}
    \frac{1}{2m_k}\sum_{i=1}^{m_k} B_{k-i}
    &\LE \frac{B_k}{2m_k}\sum_{i=1}^{m_k}e^{-i\delta_k}
    \LE \frac{B_k}{2m_k}\sum_{i=1}^{\infty}e^{-i\delta_k}
    \EQ \frac{B_k}{2m_k(e^{\delta_k}-1)}
    \\
    &\EQ B_k\frac{\delta_k^2}{e^{\delta_k}-1}
    \LE B_k(1-e^{-\delta_k})
    \EQ B_k-B_{k-1} \;.
\end{aligned}
\end{equation}
    where we used the facts that $\sum_{i=1}^\infty x^{-i} =\frac{1}{x-1}$, $\frac{1}{2m_k}=\delta_k^2$, that $\frac{\delta^2}{e^\delta-1}\le 1-e^{-\delta}$, for every $\delta>0$, and finally that $B_ke^{-\delta_k}=B_{k-1}$.
    
    We prove by induction on~$k$ that $G(k)\le B_k$, thus proving the lemma. As the basis of the induction we have $G(1)=B_1=1$. As the induction step we have:
        \begin{alignat*}{2}
G(k)
  &\EQ G(k-1)
     \;+\; \frac{1}{2m_k}\sum_{i=1}^{m_k} G(k-i)
  && \qquad \text{(definition of \(G\))} \\
  &\LE B_{k-1}
     \;+\; \frac{1}{2m_k}
       \sum_{i=1}^{m_k} B_{k-i}
  && \qquad \text{(induction hypothesis)} \\
  &\LE B_{k-1} \;+\; (B_k-B_{k-1})
  && \qquad
     \text{(by \eqref{eq:B-bound}}) \\
  &\EQ B_k
  && \tag*{\qedhere}
\end{alignat*}
\end{proof}

We now choose the sequence $a_i=i^4$. We note that $m_k=\floor{k}_A=\floor{(k-1)^{1/4}}^4$.

\begin{lemma}\label{L-G4}
    With the choice $a_i=i^4$ we have $G(k)= e^{\sqrt{2k}+O(k^{1/4})}$.
\end{lemma}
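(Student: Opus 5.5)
We bound $G(k)$ through Lemma~\ref{L-G-bound}, which gives
\[ G(k)\LE \exp\Bigl(\sum_{t=2}^k \frac{1}{\sqrt{2m_t}}\Bigr), \]
so the task reduces to estimating $\sum_{t=2}^k (2m_t)^{-1/2}$ with $m_t=\floor{(t-1)^{1/4}}^4$. Lemma~\ref{L-G-bound} only gives an upper bound, so the statement should be read as $G(k)\le e^{\sqrt{2k}+O(k^{1/4})}$. If the matching lower bound is also wanted, a crude comparison with $m_t\le t-1$ would give it.

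The plan is to group the terms $t$ according to the block of the sequence in which they lie. For $a_j<t\le a_{j+1}$, that is, $j^4<t\le (j+1)^4$, we have $m_t=j^4$, so each term equals $\frac{1}{\sqrt2\,j^2}$. Block~$j$ contains $(j+1)^4-j^4$ values of~$t$. Let $J=\ceil{(k-1)^{1/4}}$, so that $k\le (J+1)^4$ roughly. Summing over the blocks up to $J$ then gives
\[ \sum_{t=2}^k \frac{1}{\sqrt{2m_t}} \LE \frac{1}{\sqrt2}\sum_{j=1}^{J}\frac{(j+1)^4-j^4}{j^2}. \]
Expanding $(j+1)^4-j^4=4j^3+6j^2+4j+1$, the summand becomes $4j+6+4/j+1/j^2$. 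The sum up to $J$ is therefore $2J^2+O(J)$. Here I must be careful: the last block should be truncated at $k$ rather than taken in full.

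To avoid losing a constant factor in the leading term, I would instead write the sum exactly as
\[ \frac{1}{\sqrt2}\Bigl(\sum_{j<J}\frac{(j+1)^4-j^4}{j^2} + \frac{k-1-(J)^4}{J^2}\Bigr), \]
with $J=\floor{(k-1)^{1/4}}$, the last block being only partially filled. The full blocks contribute $\frac{1}{\sqrt2}(2J^2+O(J))$. The partial block contributes $\frac{1}{\sqrt2}\cdot\frac{k-1-J^4}{J^2}$, and since $k-1-J^4\le (J+1)^4-J^4=O(J^3)$, this is $O(J)$. With $J\le k^{1/4}$ the total is at most
\[ \frac{2J^2}{\sqrt2}+O(J)\LE \sqrt2\,k^{1/2}+O(k^{1/4}) = \sqrt{2k}+O(k^{1/4}), \]
which gives the claimed bound after exponentiating.

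The main obstacle is exactly this leading-constant bookkeeping. A careless bound, such as using the full last block or bounding $m_t$ below by $(t-1)/c$, would already give $\sqrt{2k}$ only up to additive $O(k^{1/4})$ errors. I must check that the error from $J^2$ versus $\sqrt{k}$ and from the boundary block is $O(J)=O(k^{1/4})$ and not $O(k^{1/2})$. Since $J^2\le\sqrt{k-1}$, the leading term is fine and the remaining terms are linear in $J$.
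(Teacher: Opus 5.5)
Your proof is correct and is essentially the paper's argument: apply Lemma~\ref{L-G-bound}, group the indices into blocks $j^4<t\le (j+1)^4$ on which $m_t=j^4$, and sum $\frac{(j+1)^4-j^4}{\sqrt{2}\,j^2}$ to get $\sqrt{2}J^2+O(J)$ with $J=\floor{(k-1)^{1/4}}<k^{1/4}$. The paper simply bounds the partial last block by a full one, which costs only $O(J)$, so your worry about losing the leading constant there is unfounded; also, the partial block has $k-J^4$ terms rather than $k-1-J^4$, a harmless slip.
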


\begin{proof}
    If $j^4<t\le (j+1)^4$, then $m_t=j^4$. By Lemma~\ref{L-G-bound} we have $G(k)\le\exp(\sum_{t=2}^k\frac{1}{\sqrt{2m_t}})$. Thus, we need to bound $S_k=\sum_{t=2}^k\frac{1}{\sqrt{2m_t}}$.

    Let $q=\floor{(k-1)^{1/4}}$, so that $q^4<k\le (q+1)^4$, and let 
    \[L_j \EQ a_{j+1}-a_j \EQ (j+1)^4-j^4 \EQ 4j^3+6j^2+4j+1 \LE 4j^3+11j^2\quad,\quad j\ge 1\;,\] be the number of values of~$t$ for which $m_t=j^4$. 
    \[ S_k  \EQ \sum_{t=2}^k\frac{1}{\sqrt{2m_t}} \LE \sum_{j=1}^q \frac{L_j}{\sqrt{2j^4}} \LE \frac{4}{\sqrt{2}}\sum_{j=1}^q j + \frac{11}{\sqrt{2}}\sum_{j=1}^q 1 \EQ \frac{4}{\sqrt{2}}\frac{q(q+1)}{2} + \frac{11}{\sqrt{2}}q \EQ \sqrt{2}q^2 + \frac{13}{\sqrt{2}}q\;. \]
    The first inequality follows as $S_k$ includes exactly $L_j$ terms equal to $\frac{1}{\sqrt{2m_j}}$, for $1\le j<q$, and $k-q^4\le L_q$ terms equal to $\frac{1}{\sqrt{2m_q}}$.

    Since $q=\floor{(k-1)^{1/4}}< k^{1/4}$, we have
    \begin{align*}
        S_k &\LE \sqrt{2}q^2 + \frac{13}{\sqrt{2}}q \LT \sqrt{2k} + \frac{13}{\sqrt{2}}k^{1/4} \;.
    \end{align*}
    The constant factor multiplying $k^{1/4}$ can be improved using a slightly more careful analysis.
\end{proof}

\begin{lemma}\label{L-H4}
    With the choice $a_i=i^4$, we have $H(k)\le e^{\frac{3}{4}k^{1/4}\log k + O(k^{1/4})}$.
\end{lemma}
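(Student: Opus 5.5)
We will use Lemma~\ref{L-H-closed}, which holds for any increasing integral sequence with $a_1=1$, and in particular for $a_i=i^4$. Part (iii) gives the exact product formula
\[ H(a_i) \EQ \prod_{j=1}^{i-1}(a_{j+1}-a_j+1) \EQ \prod_{j=1}^{i-1}\bigl((j+1)^4-j^4+1\bigr)\;. \]
Each factor is at most $4j^3+6j^2+4j+2 \le 4j^3(1+4/j)$ for $j\ge 1$, since $6j^2+4j+2\le 12j^2\le 16j^2$. Taking logarithms therefore gives
\[ \log H(a_i) \LE \sum_{j=1}^{i-1}\bigl(\log 4 + 3\log j + \log(1+4/j)\bigr) \LE (i-1)\log 4 + 3\log((i-1)!) + \sum_{j=1}^{i-1}\frac{4}{j}\;. \]
We bound the three terms separately: $(i-1)!\le (i-1)^{i-1}$ gives $3\log((i-1)!)\le 3(i-1)\log(i-1)$, and the harmonic sum is $O(\log i)$. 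Altogether this yields $\log H(a_i)\le 3i\log i + O(i)$. (Using Stirling we could even subtract $3i$, but an upper bound suffices.)

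Next we pass from $a_i$ to general $k$. Given $k\ge 2$, let $i$ satisfy $a_{i-1}<k\le a_i$, that is, $(i-1)^4<k\le i^4$, so $i=\ceil{k^{1/4}}\le k^{1/4}+1$. By the monotonicity of $H$ we have $H(k)\le H(a_i)$. Substituting $i\le k^{1/4}+1$ gives
\[ 3i\log i \LE 3(k^{1/4}+1)\log(k^{1/4}+1) \EQ \tfrac34 k^{1/4}\log k + O(\log k) + O(1)\;, \]
since $\log(k^{1/4}+1)=\tfrac14\log k + O(k^{-1/4})$. The $O(i)$ term becomes $O(k^{1/4})$. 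Combining these, $\log H(k)\le \tfrac34 k^{1/4}\log k + O(k^{1/4})$, which is the claimed bound. The case $k=1$ is trivial.

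The only step requiring care is bookkeeping the lower-order terms in the factor bound so that the correction stays $O(i)$ rather than $O(i\log i)$. The bound $\log(1+4/j)\le 4/j$ handles this, since it sums to only $O(\log i)$. Everything else is a direct application of Lemma~\ref{L-H-closed}(iii) and the monotonicity of $H$.
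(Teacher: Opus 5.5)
Your proof is correct and follows essentially the same route as the paper. Both use the product formula of Lemma~\ref{L-H-closed}(iii), bound each factor by roughly $4j^3$, apply $n!\le n^n$, and use monotonicity of $H$ to pass from $k$ to $a_i=\ceil{k^{1/4}}^4$. The only cosmetic difference is that the paper bounds each factor by $4(j+1)^3$, which gives $H(k)\le 4^q((q+1)!)^3$ directly and avoids your harmonic-sum correction term.
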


\begin{proof}
    As in the previous lemma, let $q=\floor{(k-1)^{1/4}}$ and $L_j = (j+1)^4-j^4 = 4j^3+6j^2+4j+1$. We this time use the bound $L_j\le 4(j+1)^3-1$. By Lemma~\ref{L-H-closed}, we have
    \[ H(q^4) \EQ \prod_{j=1}^{q-1}(L_j+1) \LE \prod_{j=1}^{q-1}4(j+1)^3 \EQ 4^{q-1}(q!)^3\;,\]
    and
    \[ H(k) \LE H((q+1)^4) \EQ \prod_{j=1}^{q}(L_j+1) \LE 4^{q}((q+1)!)^3\;.\]    
    Using the trivial bound $(q+1)!\le (q+1)^{q+1}$, and the fact that $q\le (k-1)^{1/4}$, we get that
    \begin{align*}
        \log H(k) &\LE q\log 4 + 3(q+1)\log(q+1)\\
        &\LE (k-1)^{1/4}\log 4 + 3((k-1)^{1/4}+1)\log((k-1)^{1/4}+1) \\
        &\EQ \frac{3}{4}k^{1/4}\log k + O(k^{1/4}) \;,
    \end{align*}
    and the lemma follows. We can further show that
    \[ q\log 4 + 3(q+1)\log(q+1) \LE 6q + 3 q\log q \quad,\quad q\ge 1\;.\]
    This implies the explicit upper bound
    $ \log H(k) \LE \frac{3}{4}k^{1/4}\log k + 6k^{1/4} \;.$ \qedhere 
\end{proof}

By combining Lemmas~\ref{L-G4} and~\ref{L-H4}, together with Lemma~\ref{L-FGH}, we get:
    \begin{theorem}
        The expected number of vertex visits performed by the improved algorithm using the sequence $a_i=i^4$ on an AUSO of dimension~$n$ is at most $e^{\sqrt{2n}+O(n^{1/4}\log n)}$.
    \end{theorem}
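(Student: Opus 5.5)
The proof is a direct assembly of the earlier results, applied at $k=n+1$. First I would note, as in the proof of Corollary~\ref{C-2cn}, that the expected number of vertex visits on an AUSO of dimension~$n$ is at most $F(n+1)=F_A(n+1)$. This holds because there are at most $2n$ facets, so a call asking for $n+1$ complementary pairs must return the sink. The recurrence~\eqref{F-eq} was derived for an arbitrary increasing integral sequence with $a_1=1$. The only extra requirement from Section~\ref{S-improved} is that $\floor{k}_A$ does not overshoot the effective target $k-i$ in the third recursive call, and the argument given there holds for any fixed sequence. In particular it applies to $a_i=i^4$.

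Next I would invoke Lemma~\ref{L-FGH} to get $F(n+1)\le G(n+1)H(n+1)$. By Lemma~\ref{L-G4} with $k=n+1$,
\[
G(n+1)\le e^{\sqrt{2(n+1)}+O((n+1)^{1/4})}.
\]
By Lemma~\ref{L-H4},
\[
H(n+1)\le e^{\frac34 (n+1)^{1/4}\log(n+1)+O((n+1)^{1/4})}.
\]
Multiplying the two bounds, the exponent of the product is
\[
\sqrt{2(n+1)}+\tfrac34(n+1)^{1/4}\log(n+1)+O(n^{1/4}).
\]

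The last step is the routine bookkeeping of replacing $n+1$ by $n$. For this I would use $\sqrt{2(n+1)}\le\sqrt{2n}+\frac{1}{\sqrt{2n}}=\sqrt{2n}+O(1)$, together with $(n+1)^{1/4}\log(n+1)=n^{1/4}\log n+o(1)$. All error terms then fit inside $O(n^{1/4}\log n)$, which gives the bound $e^{\sqrt{2n}+O(n^{1/4}\log n)}$.

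There is no real obstacle here; the substance is in the cited lemmas. The only point needing care is to confirm that the derivation of~\eqref{F-eq} does not depend on the geometric form of $A_c$. If one wants the constant explicit, the stated bound $\log H(k)\le\frac34 k^{1/4}\log k+6k^{1/4}$ and the $\frac{13}{\sqrt2}k^{1/4}$ term from Lemma~\ref{L-G4} give an explicit exponent as well.
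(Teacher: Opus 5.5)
Your proposal is correct and follows the paper's route exactly. You bound the cost by $F(n+1)$, apply Lemma~\ref{L-FGH}, and plug in Lemmas~\ref{L-G4} and~\ref{L-H4} at $k=n+1$. The paper leaves implicit two points you spell out: that recurrence~\eqref{F-eq} holds for any fixed increasing integral sequence with $a_1=1$, and the harmless replacement of $n+1$ by $n$. Your handling of both is fine.
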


This matches the bound of Theorem~\ref{T-main}, up to lower order terms, without the need to adapt the sequence~$A$ to the dimension of the input AUSO.

\section{Concluding remarks and open problems}\label{S-conclude}

We obtained an improved randomized algorithm for finding sinks of AUSOs with an expected running time of $e^{\sqrt{(2+o(1))n}}$. As this is also the upper bound we have on the `fictional' algorithm of Section~\ref{S-fiction}, it seems that this is the best bound that can be obtained using the current techniques. Obtaining an improved algorithm with a run time of $e^{o(\sqrt{n})}$ is of course a major open problem. As mentioned, obtaining an algorithm with a polynomial running time, if one exists, would have far-reaching consequences.

Matou{\v{s}}ek \cite{Matousek94} obtained an $\Omega(e^{\sqrt{2n}}/n^{1/4})$ lower bound on the complexity of the algorithm of Matou{\v{s}}ek, Sharir and Welzl \cite{MaShWe96}, and of G{\"a}rtner \cite{Gartner02}, on some AUSOs. It would be interesting to see if the techniques of Matou{\v{s}}ek~\cite{Matousek94} could be used to give a similar bound, which would be essentially tight, for the improved algorithm presented here. 

Schurr and Szab{\'o} \cite{ScSz04} obtained an $\Omega(n^2/\log n)$ lower bound on the number of queries made in the worst case by any deterministic algorithm for finding sinks of AUSOs. Improving this bound, or extending it to randomized algorithms are also very interesting open problems. 

Another interesting randomized sink-finding algorithm for AUSOs is the \emph{Random-Edge} algorithm. Random-Edge simply performs a random (directed) walk on the AUSO. From each vertex reached it chooses a random outgoing edge. As an AUSO is acyclic, the algorithm must eventually reach the sink. Upper bounding the complexity of Random-Edge is another interesting open problem. The currently best upper bound on its complexity is exponential. (See G{\"a}rtner and Kaibel \cite{GaKa07} and Hansen, Paterson and Zwick \cite{HansenPZ14}.) It is known, however, that Random-Edge is actually slower than Random-Facet, at least on some AUSOs. (See Matou{\v{s}}ek and Szab{\'o} \cite{MaSz06} and Hansen and Zwick \cite{HansenZ16}.)

\paragraph{Note added in proof} OpenAI \cite{OAI:Turn-Based-Stochastic-Mean-Payoff-Games-in-Deterministic-Quasipolynomial-Time-October-5-2026} recently announced a quasipolynomial-time algorithm for solving 2-TBSG. As far as we are aware, this result does not yield an improved algorithm for finding sinks of AUSOs.

\paragraph{Declaration} Interactions with ChatGPT 5.5 Pro led to the formulation and proof of Lemma~\ref{L-G-bound}. The authors assume responsibility for all content.

\newpage
\clearpage
\appendix
\addtolength{\headsep}{0.75cm}   
\addtolength{\textheight}{-1.5cm} 
\addtolength{\topmargin}{-0.5cm} 
\pagestyle{appendix}

\section{Basic Random-Facet related recurrence relations}\label{A-basic-recurrence}

The following two lemmas appear in Hansen and Zwick \cite{HaZw15} as Lemmas~4.2 and~4.3 and Corollary~4.4. For completeness, we repeat their proofs.

\begin{lemma}\label{L-f-exact}
    If $f(n)=f_\alpha(n)$, for $\alpha>0$, is defined as:
    \begin{align*}
    f(n) &\EQ f(n-1) + \frac{\alpha}{n}\sum_{i=1}^n f(n-i) \quad,\quad n>0 \;,\\[-5pt]
    f(0) &\EQ 1 \;.\\[-25pt]
    \end{align*}
    then
    \[ f(n) \EQ \sum_{k=0}^n \frac{\alpha^k}{k!}\binom{n}{k} \quad,\quad n\ge 0\;.\]
\end{lemma}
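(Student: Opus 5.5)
The plan is to prove the closed form by induction on $n$. The key is to rewrite the recurrence so that it involves only $f(n-1)$ and $f(n)$, avoiding the full history sum. Set $S(n)=\sum_{i=0}^{n-1} f(i)=\sum_{i=1}^n f(n-i)$. The recurrence then reads $n f(n) = n f(n-1) + \alpha S(n)$.

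Subtracting the same identity at $n-1$, namely $(n-1)f(n-1)=(n-1)f(n-2)+\alpha S(n-1)$, and using $S(n)-S(n-1)=f(n-1)$, gives the short recurrence
\[ n f(n) \EQ (2n-1+\alpha) f(n-1) - (n-1) f(n-2) \quad,\quad n\ge 2\;, \]
with initial values $f(0)=1$ and $f(1)=1+\alpha$.

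Alternatively, and I think more cleanly, I would verify the original recurrence directly on the proposed closed form $\phi(n)=\sum_{k}\frac{\alpha^k}{k!}\binom{n}{k}$, comparing coefficients of $\alpha^k$. The steps are:

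\begin{itemize}
\item By Pascal's rule, $\phi(n)-\phi(n-1)=\sum_{k\ge1}\frac{\alpha^k}{k!}\binom{n-1}{k-1}$.
\item By the hockey-stick identity $\sum_{m=0}^{n-1}\binom{m}{k-1}=\binom{n}{k}$, we get $\frac{\alpha}{n}\sum_{m=0}^{n-1}\phi(m)=\frac1n\sum_{k\ge1}\frac{\alpha^{k}}{(k-1)!}\binom{n}{k}$.
\item The identity $\frac{1}{n}\binom{n}{k}\frac{1}{(k-1)!} = \frac{1}{k!}\binom{n-1}{k-1}$ holds because $\frac{k}{n}\binom{n}{k}=\binom{n-1}{k-1}$.
\item Hence the coefficients of $\alpha^k$ agree for every $k$. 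Together with $\phi(0)=1=f(0)$, and since the recurrence determines $f$ uniquely, this gives $f=\phi$.
\end{itemize}

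I expect no real obstacle here. The only care needed is the index shift in the hockey-stick sum, specifically the bookkeeping of the $\alpha^{k-1}$ versus $\alpha^k$ terms, so that the factor $\alpha$ in front of the sum raises the degree by exactly one and matches the $k-1$ index in Pascal's rule. The boundary term $k=0$ contributes $1$ on both sides and must be handled separately.
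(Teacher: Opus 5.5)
Your proposal is correct and follows essentially the same route as the paper. The paper also verifies that the closed form satisfies the recurrence, using Pascal's rule, the identity $\frac{k}{n}\binom{n}{k}=\binom{n-1}{k-1}$, and the hockey-stick identity with an exchange of summation, and then concludes by uniqueness. Your preliminary three-term recurrence is correct and is the Laguerre recurrence the paper mentions after Lemma~\ref{L-f}, but the direct verification does not need it.
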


\begin{proof}
    We let $\bar{f}(n)=\sum_{k=0}^n \frac{\alpha^k}{k!}\binom{n}{k}$ and show that $\bar{f}(n)$ satisfies the recurrence relation defining~$f(n)$. For $n=0$ we have $\bar{f}(0)=f(0)=1$. For $n>0$, using the binomial identity $\binom{n}{k}=\binom{n-1}{k}+\binom{n-1}{k-1}$ and the identities $\binom{n-1}{k}=\frac{k+1}{n}\binom{n}{k+1}$ and $\binom{n}{k+1}=\sum_{j=k}^{n-1}\binom{j}{k}$ \footnote{Note that $\binom{j}{k}$ is the number of choices of~$k+1$ items from $[n]$ in which the largest item is $j+1$.}, we have
        \begin{align*}
        \!\!\!\!\bar{f}(n) - \bar f(n{-}1) 
        &\EQ \sum_{k=0}^n \frac{\alpha^k}{k!}\binom{n}{k} - \sum_{k=0}^{n-1} \frac{\alpha^k}{k!}\binom{n-1}{k} 
         \EQ  \sum_{k=1}^n \frac{\alpha^k}{k!}\left(\binom{n}{k}-\binom{n-1}{k}\right) \\
        &\EQ \sum_{k=1}^n \frac{\alpha^k}{k!}\binom{n-1}{k-1} 
         \EQ \sum_{k=0}^{n-1}\frac{\alpha^{k+1}}{(k+1)!}\binom{n-1}{k}
         \EQ \sum_{k=0}^{n-1}\frac{\alpha^{k+1}}{(k+1)!}\frac{k+1}{n}\binom{n}{k+1} \\
        &\EQ\! \frac{\alpha}{n} \sum_{k=0}^{n-1}\frac{\alpha^k}{k!}\binom{n}{k{+}1}
         \!\EQ\! \frac{\alpha}{n} \sum_{k=0}^{n-1}\frac{\alpha^k}{k!} \sum_{j=k}^{n-1} \binom{j}{k}
         \!\EQ\! \frac{\alpha}{n} \sum_{j=0}^{n-1} \sum_{k=0}^{j} \frac{\alpha^k}{k!} \binom{j}{k}
         \!\EQ\! \frac{\alpha}{n} \sum_{j=0}^{n-1} \bar{f}(j) \;. \qedhere
    \end{align*}
\end{proof}

\begin{lemma}\label{L-f}
    If $f(n)=f_\alpha(n)$, for $\alpha>0$, is defined in Lemma~\ref{L-f-exact}, then
    \[ f(n) \LE e^{2\sqrt{\alpha n}}\;.\]
\end{lemma}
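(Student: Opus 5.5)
The plan is to start from the closed form in Lemma~\ref{L-f-exact}, which gives
\[ f(n) \EQ \sum_{k=0}^n \frac{\alpha^k}{k!}\binom{n}{k}\;. \]
We bound each binomial coefficient crudely by $\binom{n}{k}\le \frac{n^k}{k!}$. This gives
\[ f(n) \LE \sum_{k=0}^{n} \frac{(\alpha n)^k}{(k!)^2} \LE \sum_{k=0}^{\infty} \frac{(\sqrt{\alpha n})^{2k}}{(k!)^2}\;. \]
The right-hand side is exactly the modified Bessel series $I_0(2\sqrt{\alpha n})$, but we want to avoid Bessel functions. Instead we compare it directly with the square of an exponential series.

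Set $x=\sqrt{\alpha n}$. The key observation is
\[ \sum_{k\ge 0}\frac{x^{2k}}{(k!)^2} \LE \Bigl(\sum_{k\ge0}\frac{x^k}{k!}\Bigr)^2 \EQ e^{2x}\;. \]
This holds because expanding the square gives
\[ \sum_{j,l\ge 0}\frac{x^{j+l}}{j!\,l!}\;. \]
All terms are nonnegative, and the diagonal terms $j=l=k$ are precisely $\frac{x^{2k}}{(k!)^2}$. Dropping the off-diagonal terms therefore yields the inequality. Substituting back $x=\sqrt{\alpha n}$ gives $f(n)\le e^{2\sqrt{\alpha n}}$, as claimed.

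There is essentially no obstacle once Lemma~\ref{L-f-exact} is available. The only step needing care is $\binom{n}{k}\le n^k/k!$, which follows from $n(n-1)\cdots(n-k+1)\le n^k$. The case $n=0$ is trivial, since $f(0)=1=e^0$.
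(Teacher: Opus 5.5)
Your proof is correct and follows the paper's argument essentially verbatim. It uses the closed form from Lemma~\ref{L-f-exact}, the bound $\binom{n}{k}\le n^k/k!$, and then $\sum_k a_k^2\le(\sum_k a_k)^2$ for $a_k=(\alpha n)^{k/2}/k!\ge 0$, which is exactly your step of dropping the off-diagonal terms of the squared exponential series; the only cosmetic difference is that you extend the sum to infinity before squaring, whereas the paper squares the finite sum first.
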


\begin{proof}
    Using the inequality $\binom{n}{k}\le \frac{n^k}{k!}$, the identity $\sum_{k=0}^\infty \frac{x^k}{k!}=e^x$, and the inequality $\sum_{i=0}^n a_i^2 \le (\sum_{i=0}^n a_i)^2$, when $a_i\ge 0$, we have 
    \begin{align*}
        f(n) &\EQ \sum_{k=0}^n \frac{\alpha^k}{k!}\binom{n}{k}
        \LE \sum_{k=0}^n \frac{\alpha^k}{k!}\frac{n^k}{k!}
        \EQ \sum_{k=0}^n \biggl(\frac{(\alpha n)^{k/2}}{k!}\biggr)^2 \\
        &\LE \biggl(\sum_{k=0}^n \frac{(\alpha n)^{k/2}}{k!}\biggr)^2
        \LE \biggl(\sum_{k=0}^\infty \frac{(\alpha n)^{k/2}}{k!}\biggr)^2
        \LE e^{2\sqrt{\alpha n}} \;. \qedhere
        \end{align*}
\end{proof}

We actually have $f_\alpha(n)=L_n(-\alpha)$, where $L_n(x)$ is the $n$-th \emph{Laguerre polynomial} \cite[Chapter 18]{NISTDLMF}. The Laguerre polynomials are defined as follows:
\[ L_n(x) \EQ \sum_{r=0}^n \binom{n}{r}\frac{(-x)^r}{r!} \;.\]
They also satisfy the finite recurrence:
\[ \begin{array}{c} L_{n+1}(x) \EQ \displaystyle\frac{(2n+1-x)L_n(x)-nL_{n-1}(x)}{n+1} \;, \\[10pt]
 L_1(x) \EQ 1-x \quad,\quad L_0(x) \EQ 1 \;. \end{array}\]
It is known that for $\alpha>0$,
\[
f_\alpha(n) \EQ L_n(-\alpha)
\;\sim\;
\frac{e^{-\alpha/2}}{2\sqrt{\pi}\,\alpha^{1/4}n^{1/4}}
\,e^{2\sqrt{\alpha n}}\quad,
\quad  n\to\infty\;.
\]
The simple analysis given in Lemma~\ref{L-f} is thus essentially tight.

We next analyze a shifted version $f_\alpha(n)$ used in the analysis of the improved Random Facet algorithm. 
\begin{lemma}\label{L-g}
    If $g(n)=g_\alpha(n)$, for $\alpha>0$, is defined as:
    \[ g(n) \EQ g(n-1) + \frac{\alpha}{n-1}\sum_{i=1}^{n-1} g(n-i) \quad,\quad n>1\;, \]
    \[ g(1) \EQ 1 \;,\]
    then $g_\alpha(n)=f_\alpha(n-1)$, for $n\ge 1$, where $f_\alpha(n)$ is defined in Lemma~\ref{L-f}. Thus
    \[ g(n) \LE e^{2\sqrt{\alpha (n-1)}} \LT e^{2\sqrt{\alpha n}}\;.\]
\end{lemma}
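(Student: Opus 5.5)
The plan is to show that the shifted function $\bar g(n)=f_\alpha(n-1)$ satisfies the recurrence and initial condition defining $g_\alpha$. Since the recurrence determines $g$ uniquely from $g(1)$, this gives $g_\alpha(n)=f_\alpha(n-1)$ for every $n\ge 1$.

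For the base case, $\bar g(1)=f_\alpha(0)=1=g(1)$. For $n>1$, I will write $m=n-1\ge 1$ and apply the defining recurrence of $f_\alpha$ from Lemma~\ref{L-f-exact} at $m$:
\[ f(m) \EQ f(m-1) + \frac{\alpha}{m}\sum_{i=1}^{m} f(m-i)\;. \]
Translating back, $f(m)=\bar g(n)$ and $f(m-1)=\bar g(n-1)$, and $f(m-i)=\bar g(n-i)$ for $1\le i\le m=n-1$. This yields
\[ \bar g(n)=\bar g(n-1)+\frac{\alpha}{n-1}\sum_{i=1}^{n-1}\bar g(n-i)\;, \]
which is exactly the recurrence for $g$. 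Note that $f_\alpha$ is given by a recurrence, so only its defining relation is used here, not the closed form.

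Induction on $n$ then gives $g(n)=\bar g(n)=f_\alpha(n-1)$. Lemma~\ref{L-f}, applied with argument $n-1\ge 0$, yields $g(n)\le e^{2\sqrt{\alpha(n-1)}}$. The final strict inequality $e^{2\sqrt{\alpha(n-1)}}<e^{2\sqrt{\alpha n}}$ holds because $\alpha>0$ and the exponential is strictly increasing.

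There is no real obstacle in this argument. The only point needing care is the index bookkeeping: the sum $\sum_{i=1}^{n-1}\bar g(n-i)$ must range over $\bar g(1),\dots,\bar g(n-1)$, which correspond to $f(0),\dots,f(n-2)$, exactly matching $\sum_{i=1}^m f(m-i)$.
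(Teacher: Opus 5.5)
Your proposal is correct and follows the paper's own proof: an induction with base case $g(1)=f_\alpha(0)=1$, an induction step that rewrites the $g$-recurrence as the $f_\alpha$-recurrence at argument $n-1$, and then Lemma~\ref{L-f} applied at $n-1$ for the bound. Your index bookkeeping and your justification of the final strict inequality are both sound.
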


\begin{proof}
    The relation $g_\alpha(n)=f_\alpha(n-1)$, for $n\ge 1$, is easily proved by induction. As the basis of the induction we have $g(1)=f(0)=1$. The induction step is:
    \[         g(n) \EQ g(n-1) + \frac{\alpha}{n-1}\sum_{i=1}^{n-1}g(n-i)
        \EQ f(n-2) + \frac{\alpha}{n-1}\sum_{i=1}^{n-1}f(n-1-i)
        \EQ f(n-1) \;. \qedhere \]

\end{proof}

\newpage
\restoregeometry          
\pagestyle{plain}         
\printbibliography

\end{document}